\documentclass[leqno,11pt]{article}

\usepackage{subcaption}
\usepackage{xspace}
\usepackage{amsmath}
\usepackage{tikz-cd}
\usepackage{mathtools}
\usepackage{multicol}

\usepackage[ruled,linesnumbered,lined,commentsnumbered]{algorithm2e}

\usepackage[english]{babel}
\usepackage{amsmath,amsthm,amssymb,mathrsfs} 
\usepackage{ae,aecompl}
\usepackage{times}
\usepackage[pagebackref,hypertexnames=false]{hyperref}
\usepackage{fancybox,fancyhdr,graphics,epsfig}
\usepackage{textcomp}
\usepackage{authblk}

\newtheorem{lem}{Lemma}[section]
\newtheorem{thm}[lem]{Theorem}
\newtheorem{prop}[lem]{Proposition}
\newtheorem{cor}[lem]{Corollary}

\newtheorem{remark}[lem]{Remark}

\theoremstyle{definition}
\newtheorem{definition}[lem]{Definition}

\oddsidemargin 0.15in    
\evensidemargin 0.15in   
\topmargin=-.15in \textwidth=6.2in \textheight=8.8in
\parindent=5pt
\parskip=8pt

 \usepackage{amsmath,amsthm,bbm,amssymb}

\linespread{1.25}

\DeclareMathOperator{\PH}{PH}
\DeclareMathOperator{\Ker}{Ker}
\DeclareMathOperator{\Imm}{Im}
\DeclareMathOperator{\bth}{birth}
\DeclareMathOperator{\dth}{death}
\DeclareMathOperator{\vor}{Vor}

\DeclareMathOperator{\rank}{rank} 
\DeclareMathOperator{\Int}{Int}

\DeclareMathOperator{\Cech}{\check{C}ech}
\DeclareMathOperator{\Rips}{Rips}
\DeclareMathOperator{\Vol}{Vol}
\DeclareMathOperator{\spn}{span}

\DeclareMathOperator{\diam}{diam}

\newcommand{\coA}{\cA^\mathrm{co}}
\newcommand{\cech}{\v{C}ech\xspace}
\newcommand{\bcup}{\bigcup\limits}
\newcommand{\bcap}{\bigcap\limits}

\newcommand{\Rd}{\mathbb{R}^{d}}
\newcommand{\Rdd}{\mathbb{R}^{d+1}}

\newcommand{\Pn}{\mathcal{P}_n}
\newcommand{\bs}{\boldsymbol}

\newcommand{\ind}{\mathbbm{1}}
\newcommand{\E}{ \mathbb{E}}

\newcommand\R{{\mathbb{R}}}

\def\cX{{\cal{X}}}
\def\cY{{\cal{Y}}}

\def\cC{{\cal{C}}}

\def\cA{{\cal{A}}}
\def\cD{{\cal{D}}}
\def\cN{{\cal{N}}}
\def\cP{{\cal{P}}}

\begin{document}

\title{A Coupled Alpha Complex}

\author[1]{Yohai Reani \thanks{syohai@campus.technion.ac.il}}
\author[1]{Omer Bobrowski \thanks{omer@ee.technion.ac.il}}
\affil{Viterbi Faculty of Electrical Engineering\\Technion - Israel Institute of Technology}

\maketitle

\begin{abstract}
The alpha complex is a subset of the Delaunay triangulation and is often used in computational geometry and topology. 
One of the main drawbacks of using the alpha complex is that it is non-monotone, in the sense that if ${\cal X}\subset{\cal X}'$ it is not necessarily (and generically not) the case that the corresponding alpha complexes satisfy ${\cal A}_r({\cal X})\subset{\cal A}_r({\cal X}')$. 
The lack of monotonicity may introduce significant computational costs when using the alpha complex, and in some cases even render it unusable. 
In this work we present a new construction based on the alpha complex, that is homotopy equivalent to the alpha complex while maintaining monotonicity. 
We provide the formal definitions and algorithms required to construct this complex, and to compute its homology. In addition, we analyze the size of this complex in order to argue that it is not significantly more costly to use than the standard alpha complex.
\end{abstract}

\section{Introduction}

The alpha complex \cite{edelsbrunner_alpha_shapes} is a parametrized triangulation constructed over point clouds. It is widely used in computer graphics \cite{krone_biomolecular_2016}, computational geometry \cite{liang_macromolecules_1998, liang_anatomy_1998}, topological data analysis (TDA) \cite{pranav_cosmic_web_2016}, and other fields.
Given a point cloud $\cX = \{x_1,...,x_n\}\subset\Rd$, the alpha complex $\cA_r(\cX)$ is a $d$-dimensional simplicial complex consisting of a subset of the faces in the Delaunay triangulation of $\cX$. In TDA, one of its main uses is as a substitute for the \cech complex $\Cech_r(\cX)$ (the nerve of the balls of radius $r$ centered at $\cX$), justified by the fact that the alpha and the \cech complexes are homotopy equivalent. While the \cech complex is highly useful to develop the theory and intuition in TDA (especially in probabilistic analysis \cite{bobrowski2019homological,kahle2011random}), using the alpha complex in applications is significantly more efficient computationally. The \cech complex contains $O(n^{k+1})$ many $k$-simplexes
, and those can appear in any dimension. On the other hand, the alpha complex contains simplexes only up to dimension $d$, and it can be shown \cite{seidel_upperbound_1995} that there are at most $O(n^{\lceil d/2\rceil})$ many of them. Moreover, for generic random point clouds, it can be shown \cite{edelsbrunner_expected_2017} that the alpha complex has only $O(n)$ many simplexes. Since computing homology or persistent homology, for example, requires cubical time in the number of simplexes, such a difference in the complex size can be crucial.

One of the main drawbacks of the alpha complex is the following. For any finite $\cX\subset\cY\subset \R^d$ we have a natural inclusion $\Cech_r(\cX)\subset \Cech_r(\cY)$. However, the same is not true in general for the alpha complex. In other words, adding new points to an existing alpha complex, requires us to re-calculate the entire complex.
There are various scenarios where the lack of such an inclusion can prevent us from using alpha complexes. For example:

\begin{enumerate}
\item \emph{Computing zigzag-persistence} \cite{carlsson_zigzag_2009}. As opposed to the standard persistent homology that is (commonly) computed over filtrations, in zigzag persistence the inclusion relations may go in different ways. For example, suppose that we have a sequence of point clouds $\cX_1,\cX_2\ldots$, with no inclusion relation, and we wish to find cycles that persist throughout this sequence. Using the \cech complex, we can take the sequence
\[
    \Cech_r(\cX_1) \hookrightarrow \Cech_r(\cX_1\cup \cX_2) \hookleftarrow \Cech_r(\cX_2)\hookrightarrow \Cech_r(\cX_2\cup \cX_3) \hookleftarrow \Cech_r(\cX_3) \hookrightarrow \cdots,
\]
and compute its zigzag persistence barcode, for example. However, we are currently not able to do so using alpha complexes, and the computational implications are substantial.

\item \emph{Cycle registration in persistent homology} \cite{reani2021cycle}. We recently presented a new framework for identifying matching persistent cycles between pairs of simplicial filtrations. For example, suppose that we have two point clouds $\cX,\cY\subset\R^d$. We can use the \cech filtration to compute two persistent modules $\PH_k(\cX)$ and $\PH_k(\cY)$. Our goal is to find pairs of persistent-cycles (classes) in $\PH_k(\cX)$ and $\PH_k(\cY)$ that represent  the ``same topological phenomenon" (which we define rigorously in \cite{reani2021cycle}).
Our solution heavily relies on the inclusions $\Cech_r(\cX)\hookrightarrow \Cech_r(\cX\cup \cY) \hookleftarrow \Cech_r(\cY)$, and specifically on the images of the induced maps in homology. Here as well, we cannot use the alpha complex. At the same time, using the \cech complex (or the Vietoris-Rips) becomes infeasible for rather small sample sizes.
\end{enumerate}

In order to resolve this fundamental issue, we present here a new construction we call the \emph{coupled alpha complex} and denote by $\coA_r(\cX,\cY)$. 
This is a new ``hybrid'' complex defined over pairs of finite point clouds $\cX,\cY\subset\Rd$. The key properties of this new complex are: (a) It is homotopy equivalent to the alpha complex $\cA_r(\cX\cup\cY)$; (b) It satisfies the desired inclusions that the alpha complex misses, i.e., $\cA_r(\cX)\hookrightarrow\coA_r(\cX,\cY)$ and $\cA_r(\cY)\hookrightarrow\coA_r(\cX,\cY)$;  (c) The total number of simplexes in this complex is $O(n^{\lceil (d+1)/2 \rceil})$ (still smaller compared to \cech), and for random point clouds we can show that the expected size goes down to $O(n)$.

\paragraph{Related work.} 
In \cite{bauer_morse_2017} the authors introduce the  \emph{selective Delaunay complex}, defined for a subset $E\subset X\subset\Rd$ of excluded points. This complex is a subset of simplexes $Q\subset X$ for which there exists a sphere that includes $Q$ and its interior excludes $E$.
The alpha and \cech complexes are extremal cases obtained by choosing $E=X$ and $E=\emptyset$, respectively.
In this paper we suggest a similar yet different construction. 
While in a selective Delaunay complex all the simplexes are induced by a fixed subset $E$, in our case, we have two distinct sets $\cX$ and $\cY$ of excluding points. Specifically, it is a subset of simplexes $Q\in \cX\cup\cY$ for which there exist two concentric spheres, one that includes $Q\cap\cX$ and its interior excludes $\cX$, and one that includes $Q\cap\cY$ and its interior excludes $\cY$.
In addition, while in general the selective Delaunay complex includes only $\cA_r(\cX\cup\cY$), ours also includes $\cA_r(\cX)$ and $\cA_r(\cY)$.

In \cite{blaser_relative_2020} the authors define  the \emph{Relative Delaunay-\v{C}ech complex} which is a complex designed for computing the persistent homology of $X$ relative to $A$, where $A\subset X\subset\Rd$. This complex is a subset of the Delaunay triangulation of $A\times\{0\}\cup (X\setminus A)\times\{1\}\subset\Rdd$,
which is similar in spirit to the way we construct the coupled alpha complex in Section \ref{sec:construct}.
One of the main differences between these constructions is the filtration values assigned to the simplexes. In the relative Delaunay-\v{C}ech complex, the filtration value of each $Q\subset A$ is $0$, while for $Q\not\in A$ it is the radius its minimal bounding sphere.
On the other hand, in the coupled alpha complex the filtration values are computed in a top down fashion, with no distinction between the sets $\cX$ and $\cY$. More generally, while the relative Delaunay-\v{C}ech complex is designed specifically for computing relative persistent homology, the coupled alpha complex is a general-purpose tool that serves as a ``bridge'' between arbitrary alpha complexes. Moreover, one can obtain the relative Delaunay-\v{C}ech complex (up to homotopy equivalence) from the coupled alpha complex, by assigning some of the simplexes with a filtration value of zero. Finally, the formalism we present here, simplifies the derivation of a probabilistic upper bound on the complex size, presented in Section \ref{sec:num_simp}.

\paragraph{Paper outline.} 
In Section \ref{sec:prelims} we give a brief introduction for the terms and the structures discussed in this paper.
In Sections \ref{sec:coup_alpha} and \ref{sec:construct} we introduce the main contribution of this paper -- the \emph{coupled alpha complex}. We present the formal definition and provide a two-step computation scheme. 
Finally, in Section \ref{sec:num_simp} we provide a probabilistic upper  bound for the expected size of the coupled alpha complex, in the case where the points cloud is generated at random.

\section{Preliminaries}
\label{sec:prelims}

In this section we give a brief introduction to simplicial complexes, homology and persistent homology. For more details  \cite{edelsbrunner_topological_2002,edelsbrunner_persistent_2008,edelsbrunner_persistent_2014,hatcher_algebraic_2002,zomorodian_computing_2005}. 

\subsection{Simplicial homology}

An \emph{abstract simplicial complex} over a set $S$, is a collection of finite subsets $K$ that is closed under inclusion, i.e., if $P\in K$ and $Q\subset P$, then, $Q\in K$. The elements of $K$ are called \emph{simplexes} and their dimension is determined by their size minus one. For $Q\subset P\in K$, we say that $Q$ is a \emph{face} of $P$, and $P$ is a \emph{co-face} of $Q$ of \emph{co-dimension} $d$, where $d=\dim(P)-\dim(Q)$.  

\paragraph{Homology.}
Homology is a topological-algebraic structure that describes the shape of a topological space by its connected components, holes, cavities, and generally  $k$-dimensional cycles (see Figure \ref{fig:Hom_example}). 
Loosely speaking, given a topological space $X$, $H_0(X)$ is an abelian group generated by elements that correspond to the connected components of $X$; similarly $H_1(X)$ is generated by ``holes'' in $X$; $H_2(X)$ is generated by the ``cavities'' or ``bubbles'' in $X$. Generally, we can define the group $H_k(X)$ generated by the non-trivial $k$-dimensional cycles of $X$. A $k$-dimensional cycle can be thought of as the boundary of a $(k+1)$-dimensional object (i.e.~with the interior excluded).

\begin{figure}[h]
    \centering
    \includegraphics[scale=0.6]{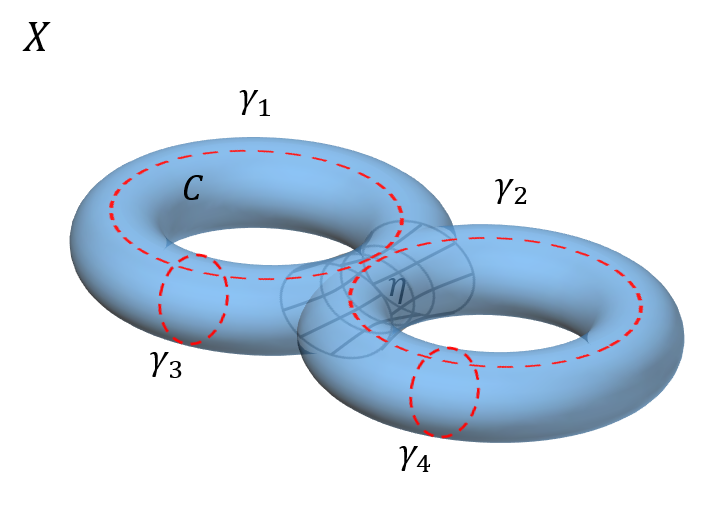}
    \caption{Homology example. The space here is 2-dimensional surface of genus 2. This manifold is composed of a single connected component, denoted by $C$, has four $1$-dimensional holes denoted by $\gamma_1, \gamma_2, \gamma_3$ and $\gamma_4$ (emphasized by the red dashed circles), and one $2$-dimensional hole, denoted by $\eta$, which is the entire surface that encloses the air-pocket inside. Hence, loosely speaking the homology groups of $X$ are given by $H_0(X)=\spn\{C\}$, $H_1(X)=\spn\{\gamma_1,\gamma_2,\gamma_3,\gamma_4\}$, $H_2(X)=\spn\{\eta\}$ and $H_k(X)=0$ for $k\geq3$.}
    \label{fig:Hom_example}
\end{figure}

Formally, let $X$ be a simplicial complex. The $k$-dimensional chain group $C_k(X)$ is a free abelian group generated by the $k$-dimensional simplexes in $X$. 
In this article we will use $\mathbb{Z}_2$ coefficients, and therefore $C_k(X)$ is a vector space. The elements of $C_k(X)$ are formal sums of $k$-simplexes called \emph{chains}.
The \emph{boundary homomorphism} $\partial_k:C_k(X)\to C_{k-1}(X)$ is defined as follows. If $\sigma$ is a chain representing a single $k$-simplex, then
$
	\partial_k(\sigma) = \sum_{\tau < \sigma}\tau$,
where $\tau < \sigma$ denotes that $\tau$ is a $(k-1)$-dimensional face of $\sigma$.
For general chains $\gamma=\sum_{i}\sigma_i\in C_k(X)$, $\partial_k$ extends linearly, i.e.~
$
    \partial_k(\gamma)=\sum_{i}\partial_k(\sigma_i)$.
It can be shown that $\partial_{k-1}\circ\partial_k \equiv 0$ for every $k>0$, and the sequence 
\[
\cdots\rightarrow C_{k+1}(X)\overset{\partial_{k+1}}{\rightarrow} C_k(X) \overset{\partial_k}{\rightarrow} C_{k-1}\rightarrow\cdots
\]
is known as a \emph{chain complex}.
Next, we define the subgroups
\[
Z_k(X)=\Ker(\partial_k(X)),\qquad
B_k(X)=\Imm(\partial_{k+1}(X)),
\]
so that $B_k(X)\subset Z_k(X)$. The group $Z_k(X)$ is known as the \emph{$k$-cycle group} (i.e.~chains whose boundary is zero) and $B_k(X)$ as the \emph{$k$-boundary group} (i.e.~$k$-cycles that are boundaries of $(k+1)$-dimensional chains).
The $k$-th \emph{homology group} is then defined as the quotient group,
$$H_k(X)= Z_k(X) / B_k(X).$$
In other words, the $k$-th homology group $H_k(X)$ consists equivalence classes of $k$-dimensional cycles who differ only by a boundary (called \emph{homological} cycles).
The ranks of the homology groups, called the \emph{Betti numbers}, are denoted
$\beta_k = \rank(H_k)$.

As mentioned earlier, intuitively speaking, the generators (or basis) of $H_0(X)$ correspond to the connected components of $X$, $H_1(X)$ corresponds to the holes in $X$, and $H_2(X)$ are the cavities. The definitions provided above are for simplicial homology, while other notions of homology groups can be defined for a much larger classes of topological spaces (see \cite{hatcher_algebraic_2002}). The intuition, however, is similar. 

In addition to homology, throughout the paper we will also use the following two terms which can be defined for simplicial homology as well as the more general notions of homology.

\paragraph{Simplicial maps and induced homomorphisms.} Let $X$ and $Y$ be simplicial complexes and let $f:X\rightarrow Y$ be a simplicial map, i.e., $f([v_0,...,v_k])=[f(v_0),...,f(v_k)]\in Y$  for $[v_0,...,v_k]\in X$. 
Then homology theory provides a sequence of induced functions denoted $f_* :H_k(X)\rightarrow H_k(Y)$, that map $k$-cycles in $X$ to $k$-cycles in $Y$.

\paragraph{Homotopy Equivalence.} This is a notion of similarity between spaces that is weaker than homeomorphism.
Loosely speaking, two topological spaces $X$ and $Y$ are \emph{homotopy equivalent}, denoted $X\simeq Y$, if one can be continuously deformed into the other. 
In particular, homotopy equivalence between spaces implies similar homology, i.e.~if $X\simeq Y$, then $H_k(X)\cong H_k(Y)$ for all $k\geq 0$.

\subsection{Geometric complexes}
\label{sec:geo_comp}
Simplicial complexes are the fundamental building blocks in many TDA methods, where they are used for approximating geometric shapes using discrete structures. 
In this section we present a few special types of geometric complexes commonly used in TDA.

\begin{definition}[\v{C}ech Complex]\label{def:cech}
Let $\cX$ be a finite set of points in a metric space. The \emph{\v{C}ech} complex of $\cX$ with radius $r$, denoted by $\Cech_r(\cX)$, is an abstract simplicial complex, constructed using the intersections of balls around $\cX$,
$$\Cech_r(\cX) :=  \big\{ Q\subset\cX : \bcap_{x\in Q} B_r(x)\neq\emptyset \big\},$$
where $B_r(x)$ is a ball of radius $r$ centered at $x$. 
\end{definition}

\begin{definition}[Vietoris-Rips Complex]\label{def:rips}
Let $\cX$ be a finite set of points in a metric space. The \emph{Vietoris-Rips} complex of $\cX$ with radius $r$, denoted by $\Rips_r(\cX)$, is an abstract simplicial complex, constructed  by pairwise intersections of balls,
$$\Rips_r(\cX) :=  \big\{Q \subset \cX : \forall x, x'\in Q\text{, }B_r(x)\cap B_r(x')\neq\emptyset\big\}.$$
In other words, all the points in $Q$ are within less than distance $2r$ from each other.
\end{definition}

\begin{figure}
     \centering
     \begin{subfigure}[b]{0.3\textwidth}
         \centering
         \includegraphics[width=\textwidth]{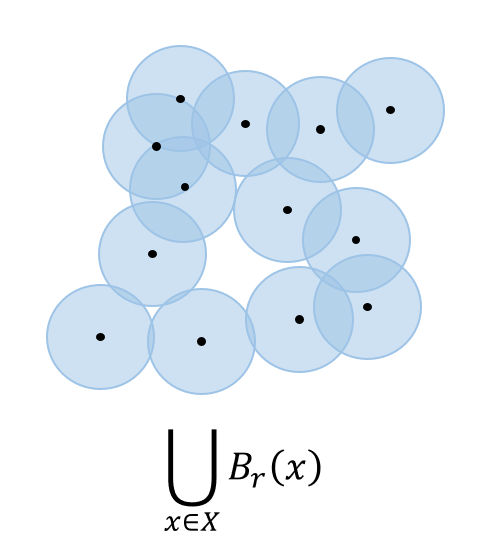}
        \caption{}
     \end{subfigure}
     \begin{subfigure}[b]{0.3\textwidth}
         \centering
         \includegraphics[width=\textwidth]{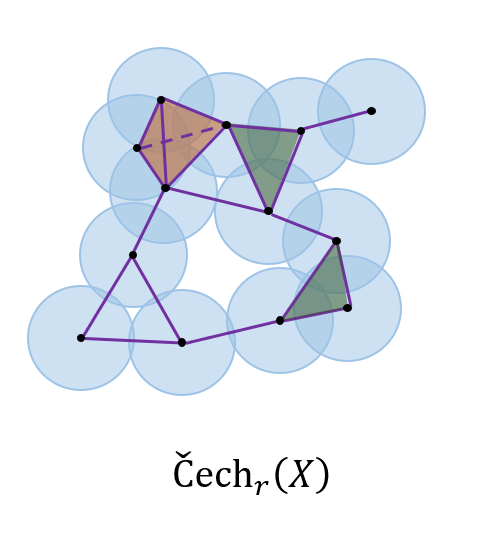}
         \caption{}
     \end{subfigure}
     \begin{subfigure}[b]{0.3\textwidth}
         \centering
         \includegraphics[width=\textwidth]{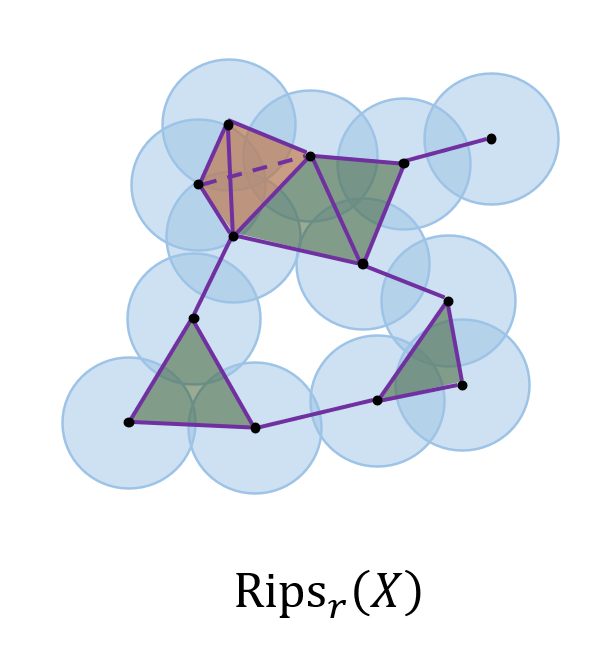}
         \caption{}
     \end{subfigure}
     \caption{(a) $\bigcup_{x\in X}B_r(x)$ - the ball cover of $X$ with radius $r$. (b) $\Cech_r(X)$ - the \v{C}ech complex of $X$ with radius $r$. (c) $\Rips_r(X)$ - the Vietoris-Rips complex of $X$ with radius $r$.}
        \label{fig:Cech_Rips}
\end{figure}

The \v{C}ech and Rips complexes are the most extensively studied complexes in TDA.
The Rips complex is commonly used in applications (see for example \cite{de_silva_homological_2007})  due to its simple definition that depends on pairwise distances only. The Rips complex can be also viewed as an approximation for the \v{C}ech complex by the following relation \cite{de_silva_coverage_2007,edelsbrunner_computational_2010},
$$\Rips_r(\cX)\hookrightarrow\Cech_{\sqrt{2}r}(\cX)\hookrightarrow\Rips_{\sqrt{2}r}(\cX).$$
The construction of the \v{C}ech complex is a bit more intricate, hence it is slightly less popular in applications. However, the \v{C}ech complex plays a central role in many theoretical results, especially in the random setting (see for example \cite{kahle2011random,yogeshwaran2017random,bobrowski2019homological,auffinger2020topologies,skraba2020homological}). This largely due to the fact that the \cech complex is homotopy equivalent to the ball cover inducing it due to the Nerve Lemma which we state in the following.

\begin{definition}[Nerve of a covering]
Let $X$ be a topological space and let $\mathcal{U}=\{U_i\}_{i\in I}$ be a cover of $X$. The Nerve of $\mathcal{U}$, denoted by $\cN(\mathcal{U})$, consists of all finite subsets $J\subset I$ such that, 
$$\bcap_{i\in J}U_i\neq\emptyset.$$
\end{definition}
Note that by definition $\cN(\mathcal{U})$ is an abstract simplicial complex.

\begin{lem}[Nerve Lemma \cite{borsuk_imbedding_1948}]\label{lemma:nerve}
Let $X$ be a topological space and let $\mathcal{U}=\{U_i\}_{i\in I}$ be a good cover of $X$, i.e.~for every $J\subset I$ the set $\bcap_{i\in J}U_i$ is either contractible or empty. Then, the nerve $N(\mathcal{U})$ is homotopy equivalent to $\bcup_{i\in I}U_i$.
\end{lem}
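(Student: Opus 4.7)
The plan is to build an intermediate ``blowup'' space $B$ that admits natural projections to both $X$ and the geometric realization $|N(\mathcal{U})|$, and to show each projection is a homotopy equivalence. Specifically, let
$$B = \Bigl(\,\coprod_{\sigma \in N(\mathcal{U})} U_\sigma \times \Delta^{\sigma}\Bigr)\big/\!\sim,$$
where $U_\sigma = \bigcap_{i\in\sigma} U_i$, $\Delta^\sigma$ is the standard geometric simplex on the vertex set $\sigma$, and one identifies $U_\sigma \times \Delta^\tau$ with its image in $U_\tau \times \Delta^\tau$ for each face $\tau\subset\sigma$. Let $\pi_X: B\to X$ and $\pi_N: B\to|N(\mathcal{U})|$ be the two coordinate projections.

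The fiber of $\pi_N$ over a point in the open simplex corresponding to $\sigma$ is $U_\sigma$, which is contractible by the good-cover hypothesis. The fiber of $\pi_X$ over $x\in X$ is the realization of $N(\mathcal{U}_x)$ with $\mathcal{U}_x=\{i\in I : x\in U_i\}$; since every subfamily of $\mathcal{U}_x$ contains $x$ in its intersection, this nerve is a full simplex on $\mathcal{U}_x$, hence contractible. To promote ``contractible fibers'' to a genuine homotopy equivalence, I would use a partition of unity $\{\rho_i\}_{i\in I}$ subordinate to $\mathcal{U}$ to build an explicit map $X\to |N(\mathcal{U})|$ via $x\mapsto \sum_i \rho_i(x)\,[i]$ and a compatible section of $\pi_X$; the linear structure on each simplex then provides straight-line homotopies verifying that these maps invert $\pi_N$ and $\pi_X$ up to homotopy.

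A cleaner alternative, better suited to the finite good covers that arise in this paper (induced by finite point clouds), is induction on $|I|$. The base case $|I|=1$ is immediate since both $U_1$ and $N(\{U_1\})$ are contractible. For the inductive step, write $X = V \cup W$ with $V = U_{|I|}$ and $W = \bigcup_{i<|I|} U_i$; the good-cover property ensures that $\{U_{|I|}\cap U_i\}_{i<|I|}$ is itself a good cover of $V\cap W$. The inductive hypothesis applied to $W$ and to $V\cap W$, combined with the observation that $|N(\mathcal{U})|$ is the pushout of $|N(\{U_i\}_{i<|I|})|$ and the cone on $|N(\{U_{|I|}\cap U_i\}_{i<|I|})|$ along the latter, lets one conclude via the gluing lemma for homotopy pushouts.

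The hard part will be upgrading fiberwise contractibility (or the piecewise compatible equivalences in the inductive argument) to a global homotopy equivalence. This step genuinely requires some regularity on $X$: either paracompactness, so that a partition of unity exists and the explicit homotopies above can be constructed, or cofibration hypotheses on the cover inclusions so that the gluing lemma for homotopy pushouts applies. In the CW-like settings of interest for alpha and \v{C}ech complexes these conditions are automatic, so I would present the inductive argument for finite covers in detail and cite the classical references (e.g.~Hatcher, or Borsuk's original paper) for the fully general statement.
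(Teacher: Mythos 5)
The paper offers no proof of this lemma at all: it is quoted as a classical result of Borsuk and used as a black box, so there is no in-paper argument to measure yours against. Judged on its own terms, your sketch correctly identifies the two standard routes --- the blowup space $B$ with its two projections $\pi_X$, $\pi_N$, and induction on the size of a finite cover via the gluing lemma for homotopy pushouts --- and you are right that the whole difficulty is concentrated in upgrading fiberwise contractibility to a global equivalence, which is exactly where numerability or cofibration hypotheses must enter.

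Two points to tighten if you write this out. First, the straight-line homotopies you invoke genuinely suffice only for $\pi_X$: a partition of unity gives a section $s$ with $\pi_X\circ s=\id_X$ and $s\circ\pi_X\simeq \id_B$ by sliding linearly in the simplex coordinate, and this part does not use the good-cover hypothesis at all. For $\pi_N$ there is nothing for a straight-line homotopy to act on; the contractibility of each $U_\sigma$ has to be fed in through an induction over the skeleta of $|\cN(\mathcal{U})|$, extending a homotopy inverse cell by cell. As written, your text slightly conflates the two projections. Second, and more relevant to this paper: the covers to which the lemma is actually applied (Corollary~\ref{cor: cech_nerve} and the lemma identifying $\coA_r(\cX,\cY)$ with $\cA_r(\cX\cup\cY)$) consist of \emph{closed} convex sets --- balls and Voronoi balls --- not open sets, so the partition-of-unity version does not apply verbatim. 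Your inductive argument is the better fit: finite families of convex bodies and their nonempty intersections form NDR pairs, so the union is a genuine homotopy pushout and the gluing lemma applies; alternatively, cite the closed/convex-cover form of the nerve theorem directly. In the finite, convex setting of this paper all the regularity you worry about is automatic, so either route closes the gap.
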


A direct result of the Nerve Lemma is the following corollary.
\begin{cor} Let $\cX\subset\Rd$ be a finite set of points. Then,
$$\Cech_r(\cX)\simeq \bigcup_{x\in\cX}B_r(x).$$
\label{cor: cech_nerve}
\end{cor}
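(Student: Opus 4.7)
The plan is to recognize that the \v{C}ech complex is literally defined as the nerve of a specific covering, and then apply the Nerve Lemma to that covering. So the work reduces to verifying the hypotheses of Lemma \ref{lemma:nerve}.

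First I would set $X = \bigcup_{x\in\cX} B_r(x)$ and take the covering $\mathcal{U} = \{B_r(x)\}_{x\in\cX}$, indexed by the finite set $\cX$. By Definition \ref{def:cech}, a subset $Q \subset \cX$ belongs to $\Cech_r(\cX)$ precisely when $\bigcap_{x\in Q} B_r(x) \neq \emptyset$, which is exactly the condition defining $\cN(\mathcal{U})$. Thus $\Cech_r(\cX) = \cN(\mathcal{U})$ as abstract simplicial complexes, with no approximation involved.

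Next I would check that $\mathcal{U}$ is a good cover of $X$. For any $J \subset \cX$, the intersection $\bigcap_{x\in J} B_r(x)$ is an intersection of convex sets (closed or open balls in $\Rd$ are convex), hence convex. Any nonempty convex subset of $\Rd$ is star-shaped about any of its points and therefore contractible. So each finite intersection is either empty or contractible, which is precisely the good-cover hypothesis.

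With both ingredients in place, Lemma \ref{lemma:nerve} yields $\cN(\mathcal{U}) \simeq \bigcup_{x\in\cX} B_r(x)$, and combining this with the identification $\Cech_r(\cX) = \cN(\mathcal{U})$ gives the claimed homotopy equivalence. There is no real obstacle here; the only thing to be mildly careful about is distinguishing the definitional equality $\Cech_r(\cX) = \cN(\mathcal{U})$ from the geometric equivalence provided by the lemma, so the reader is not left wondering where each step comes from.
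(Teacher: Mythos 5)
Your proof is correct and matches the paper's intent exactly: the paper states this corollary as a direct consequence of the Nerve Lemma, and your argument simply fills in the two standard observations (that $\Cech_r(\cX)$ is by definition the nerve of the ball cover, and that intersections of balls are convex, hence contractible or empty, so the cover is good). Nothing further is needed.
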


Corollary \ref{cor: cech_nerve} implies that the homology groups of $\Cech_r(\cX)$ and $\bigcup_{x\in\cX}B_r(x)$ are isomorphic. Hence, they can be used interchangeably when trying to prove a result concerning their homotopy type or homology groups.

The next complex we discuss, the \emph{alpha complex}, serves as a basis for the construction of the \emph{coupled alpha complex} introduced in Section \ref{sec:coup_alpha}. The alpha complex is homotopy equivalent to the \v{C}ech complex, but with much fewer simplexes. 
Let $\cX$ be a finite set of points in a metric space $(M,d)$. 
The \emph{Voronoi cell} of $x\in\cX$ with respect to $\cX$ is defined as 
$$\vor(x,\cX):= \left\{z\in M : d(x,z)\leq d(x',z),\ \forall x'\in\cX\right\}.$$
In addition, we define the \emph{Voronoi ball} of $x$ with respect to $\cX$, as
\begin{equation}\label{eqn:vor}
\vor_r(x,\cX) :=B_r(x)\cap \vor(x,\cX).
\end{equation}

\begin{definition}[Alpha complex]\label{def:alpha}
Let $\cX$ be a finite set of points in a metric space. The Alpha Complex of $\cX$ with parameter $r$, denoted by $\cA_r(\cX)$, is defined as the nerve of all the Voronoi balls, i.e.
$$\cA_r(\cX) :=  \big\{ Q\subset\cX : \bcap_{x\in Q} \vor_r(x,\cX)\neq\emptyset \big\}.$$
\end{definition}

By the Nerve Lemma, we have that $\cA_r(\cX) \simeq \Cech_r(\cX)$, and in particular they have the same homology.

Throughout this article we will assume that a given point set is in \emph{general position}, defined as follows.

\begin{definition}[General Position] \label{def:gen_pos}
A finite set $P\subset\Rd$ ($|P|\geq d+1$) is said to be in general position, if for every $Q\subseteq P$ of size $d+1$,
\begin{enumerate}
    \item The points of $Q$ do not lie on a $(d-1)$-dimensional flat.
    \item No point of $P\setminus Q$ lies on the circumsphere of $Q$.
\end{enumerate}
\end{definition}
In this case, the alpha complex $\cA_r(\cX)$ can be realized as a (geometric) simplicial complex embedded in $\mathbb{R}^d$ (as opposed to the \v{C}ech complex), i.e.~it also satisfies the following geometric condition,
\[
Q,Q'\in \cA_r(\cX) \quad \Longrightarrow\quad Q\cap Q'\in \cA_r(\cX).
\]
In other words, if two embedded simplexes intersect, they must intersect along a common face.
In addition, for $r$ large enough, $\cA_r(\cX)$ becomes identical to the \emph{Delaunay triangulation}, defined as follows.
\begin{definition} [Delaunay Triangulation]
Let $\cX\subset\Rd$ be a finite set. The Delaunay Triangulation of $\cX$, denoted by $\mathcal{D}(\cX)$, is a triangulation of $\cX$ such that the circumsphere of each $d$-simplex in the triangulation does not contain any point of $\cX$.
\end{definition}
Note that for sets in general position the Delaunay triangulation is unique (see \cite{boissonnat2018geometric}).

\paragraph{Alpha complex computation.} While algorithms for constructing the \cech and Rips complexes are  derived directly from their definitions, an algorithm for the Alpha complex is derived from its definition in a rather dual way, based on the following proposition (see Figure \ref{fig:alpha_comp}). 
\begin{prop}[\cite{edelsbrunner_alpha_shapes}] 
\label{prop: Del_eq_alpha}
Let $\cX\subset\Rd$ be a finite set in general position. Then, 
$$\mathcal{D}(\cX)=\{Q\subset\cX:\bigcap_{x\in Q}\vor(x,\cX)\neq\emptyset\}=\cA_\infty(\cX).$$
\end{prop}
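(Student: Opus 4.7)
The proof splits into two equalities. The second one, $\{Q\subset\cX:\bigcap_{x\in Q}\vor(x,\cX)\neq\emptyset\}=\cA_\infty(\cX)$, is essentially a definitional unwinding: from \eqref{eqn:vor} we have $\vor_r(x,\cX)=B_r(x)\cap\vor(x,\cX)$, and since $B_\infty(x)=\Rd$, the ball constraint in Definition \ref{def:alpha} disappears when $r=\infty$. So the real content is the first equality, which I would prove by establishing two inclusions.

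For the inclusion $\mathcal{D}(\cX)\subseteq\{Q:\bigcap_{x\in Q}\vor(x,\cX)\neq\emptyset\}$, I would first treat the case where $Q$ is a $d$-simplex of $\mathcal{D}(\cX)$. By definition its circumsphere contains no point of $\cX$ in its open interior. Let $c$ be the circumcenter and $R$ the circumradius. Then every $x\in Q$ satisfies $d(c,x)=R$, while every $x'\in\cX\setminus Q$ satisfies $d(c,x')\geq R$ (strictly, by general position). Hence $c$ is at least as close to each $x\in Q$ as to any other point of $\cX$, so $c\in\vor(x,\cX)$ for every $x\in Q$, proving the intersection is nonempty. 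For a lower-dimensional simplex $Q\in\mathcal{D}(\cX)$, I would use the standard fact that the Delaunay triangulation of a set in general position is a pure $d$-dimensional complex, so $Q$ is a face of some $d$-simplex $P\in\mathcal{D}(\cX)$; since $Q\subset P$ we have $\bigcap_{x\in Q}\vor(x,\cX)\supseteq\bigcap_{x\in P}\vor(x,\cX)\neq\emptyset$.

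For the reverse inclusion, suppose $Q\subset\cX$ and pick $c\in\bigcap_{x\in Q}\vor(x,\cX)$. By definition of the Voronoi cell, for any $x,x''\in Q$ we have $d(c,x)\leq d(c,x'')$ and $d(c,x'')\leq d(c,x)$, so all points of $Q$ lie on a common sphere $S$ centered at $c$ of some radius $R$, and no point of $\cX$ lies strictly inside $S$. General position forces $|Q|\leq d+1$: otherwise $Q$ would contain $d+2$ co-spherical points, contradicting condition (2) of Definition \ref{def:gen_pos}. If $|Q|=d+1$ then condition (1) makes $Q$ the vertex set of a nondegenerate $d$-simplex, its circumsphere is $S$, and the emptiness of $S$'s interior puts $Q$ directly in $\mathcal{D}(\cX)$. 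If $|Q|<d+1$, I would extend $Q$ to a Delaunay $d$-simplex by a continuous shrinking argument: move $c$ along the affine flat of points equidistant to $Q$ (the perpendicular bisector flat, of dimension $d-|Q|+1$), simultaneously adjusting the radius so the sphere keeps passing through $Q$ while never admitting any point of $\cX$ to its interior, until an additional point of $\cX$ is captured on the sphere. Repeating this process at most $d+1-|Q|$ times produces a $(d+1)$-subset of $\cX$ that is co-spherical with empty interior, i.e.\ a Delaunay $d$-simplex containing $Q$ as a face; hence $Q\in\mathcal{D}(\cX)$.

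The main obstacle is making the extension procedure rigorous — specifically, showing that as the sphere's center slides along the bisector flat one can always decrease the radius while still enclosing no point of $\cX$ strictly, and that this process terminates with exactly $d+1-|Q|$ new points being added one by one (rather than two at once or getting stuck). General position is what guarantees both termination and genericity: it rules out multiple new points being hit simultaneously and ensures the resulting $(d+1)$-set spans a nondegenerate simplex. Once these geometric lemmas are in place, the two inclusions combine to give $\mathcal{D}(\cX)=\cA_\infty(\cX)$.
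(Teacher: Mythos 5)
The paper does not prove Proposition~\ref{prop: Del_eq_alpha}; it is quoted from the alpha-shapes literature, so there is no in-paper argument to compare against. Your proof is the standard Delaunay--Voronoi duality argument and is essentially correct: the reduction of the second equality to the definition of $\vor_r$, the forward inclusion via the circumcenter of a $d$-simplex together with purity of the triangulation, and the observation that a nonempty Voronoi intersection forces $|Q|\le d+1$ by condition (2) of Definition~\ref{def:gen_pos} are all sound. (One small point worth making explicit in the forward direction: the definition of $\mathcal{D}(\cX)$ only says it is \emph{a} triangulation with empty circumspheres, so in the reverse direction you are implicitly using uniqueness of the Delaunay triangulation for points in general position to conclude that \emph{every} empty-circumsphere $d$-simplex actually belongs to it; the paper records this uniqueness right after the definition.)

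The one genuine gap is the step you yourself flag: extending a co-spherical empty configuration with $|Q|<d+1$ to a Delaunay $d$-simplex. The continuous ``slide the center along the bisector flat and shrink until a new point is captured'' argument can be made rigorous, but you would need to rule out the sphere escaping to infinity without ever capturing a point, and to justify that general position prevents capturing two points at once. A cleaner way to close it, staying entirely within the paper's definitions, is polyhedral: the set $K=\bigcap_{x\in Q}\vor(x,\cX)$ is a nonempty closed convex polyhedron cut out by the half-spaces $\{z:\langle z,x'-x\rangle\le\tfrac12(\|x'\|^2-\|x\|^2)\}$. Since $\cX$ affinely spans $\Rd$ (general position with $|\cX|\ge d+1$), each Voronoi cell, hence $K$, contains no line, so $K$ has a vertex $v$. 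At $v$ the set $N(v)$ of points of $\cX$ nearest to $v$ must have the vectors $\{x'-x\}_{x'\in N(v)}$ spanning $\Rd$, forcing $|N(v)|\ge d+1$, while condition (2) gives $|N(v)|\le d+1$; thus $N(v)$ is exactly $d+1$ points on an empty sphere centered at $v$, i.e.\ a Delaunay $d$-simplex containing $Q$ as a face. Substituting this for the deformation argument makes your proof complete.
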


\begin{figure}
	\begin{subfigure}[b]{0.23\textwidth}
		\centering
		\includegraphics[scale=0.45]{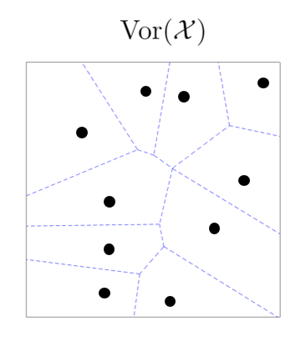}
		\caption{}
	\end{subfigure}
	\begin{subfigure}[b]{0.24\textwidth}
		\centering
		\includegraphics[scale=0.45]{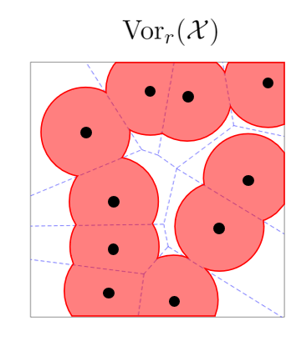}
		\caption{}
	\end{subfigure}
	\begin{subfigure}[b]{0.23\textwidth}
		\centering
		\includegraphics[scale=0.45]{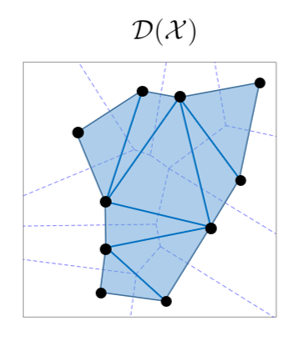}
		\caption{}
	\end{subfigure}
	\begin{subfigure}[b]{0.23\textwidth}
		\centering
		\includegraphics[scale=0.45]{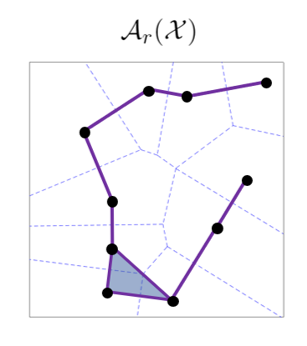}
		\caption{}
	\end{subfigure}
	\caption{Alpha complex. (a) The Voronoi tessellation for a set $\cX\subset\mathbb{R}^2$ in general position. (b) The set of Voronoi balls of radius $r$ of $\cX$. (c) The Delaunay triangulation of $\cX$ which is the dual graph of the Voronoi tessellation. (d) The alpha complex of $\cX$ with radius $r$, which is a subset of the complex generated by taking all the faces of the Delaunay triangulation.}
	\label{fig:alpha_comp}
\end{figure}

Given $\cX\subset\Rd$, we first compute its Delaunay triangulation, which by Proposition \ref{prop: Del_eq_alpha} equals to $\cA_\infty(\cX)$. 
Next, the identification of the subset of simplexes that are in $\cA_r(\cX)$, is done in a top down fashion, starting from the top dimensional simplexes and going downwards.

Let $Q\in\cA_\infty(\cX)$ and denote the filtration value of $Q$ by 
\begin{equation}
    r(Q)=\inf_{r\in\mathbb{R}}\{Q\in\cA_r(\cX)\}=\inf\left\{r\in\mathbb{R} : \bigcap_{x\in Q}\vor_r(x,\cX)\neq\emptyset\right\}.
\end{equation}
In other words, $r(Q)$ is the minimal value for which the Voronoi balls of $Q$ intersect.
The problem of finding $r(Q)$ can be translated into the problem of finding the radius of the minimal $(d-1)$-sphere that includes the points of $Q$ and does not contain any points of $\cX\setminus Q$ in its interior. Thus we obtain the following optimization problem, 
\begin{equation}
    \label{eq: alpha_filt}
    r(Q)=\min_{c\in \bigcap_{x\in Q}\vor(x,\cX)}\|q-c\|\,
\end{equation}
for arbitrary $q\in Q$.
Computing \eqref{eq: alpha_filt} directly is hard. Instead, $r(Q)$ can be computed in a top-down fashion. Define $U(Q)$ to be the set of all $Q$ co-faces $P\in\cA_\infty(\cX)$ of co-dimension $1$.
In \cite{boissonnat2018geometric} the authors show that $r(Q)$ equals to one of two possible values: if the minimal circumsphere of $Q$ does not contain any point of $\cX$ in its interior, then $r(Q)$ equals to the radius of that sphere. Otherwise, $r(Q)=\min_{P\in U(Q)}r(P)$. 
For more information and an explicit algorithm for the computation of the alpha complex see \cite{boissonnat2018geometric}.

\subsection{Persistent homology}
Persistent homology is one of the fundamental tools used in TDA, and can be thought of as a multi-scale version of homology.
While homology is calculated for a single space $X$, persistent homology is applied to a \emph{filtration}.
Let $X$ be a topological space and consider a filtration $\{X_t\}_{t\in \mathbb{R}}$, so that for all $s\le t$ we have $X_s\subset X_t\subset X$. As $t$ is increased, holes can be created and/or filled in, introducing changes to the homology. Persistent homology is used to track these changes.

For $s\le t$, the inclusion $X_s \hookrightarrow X_t$ induces a homomorphism $H_k(X_s)\to H_k(X_t)$ between the homology groups. 
These induced maps enable us to track the evolution of homology classes throughout the filtration, from the point when they are first formed (\emph{born}) to the point when they become boundaries, and hence trivial (\emph{die}).
The algebraic structure tracking this evolution is called a \emph{persistence module}, denoted $\mathrm{PH}_k(X)$. In \cite{zomorodian_computing_2005} it was shown that $\mathrm{PH}_k(X)$ has a unique decomposition into basis elements called \emph{persistence intervals}. Intuitively, each persistence interval tracks a single $k$-cycle from birth to death.
For each persistence cycle $\gamma\in\PH_k(X)$ we denote by 
$\bth(\gamma)$ the point (value of $t$) where $\gamma$ is first created, and by $\dth(\gamma)$  the point where $\gamma$ becomes trivial. The entire lifetime interval is denoted by $\Int(\gamma)=[\bth(\gamma),\dth(\gamma))$. 
In most TDA applications, once persistent homology is calculated one outputs a numerical summary in the form of a \emph{barcode} or a \emph{persistence diagram} (see Figure \ref{fig:Pers_hom_exmp}). These are two equivalent ways to visually represent the collection of $(birth, death)$ pairs for all persistence intervals.

\begin{figure}
    \centering
    \includegraphics[scale=0.7]{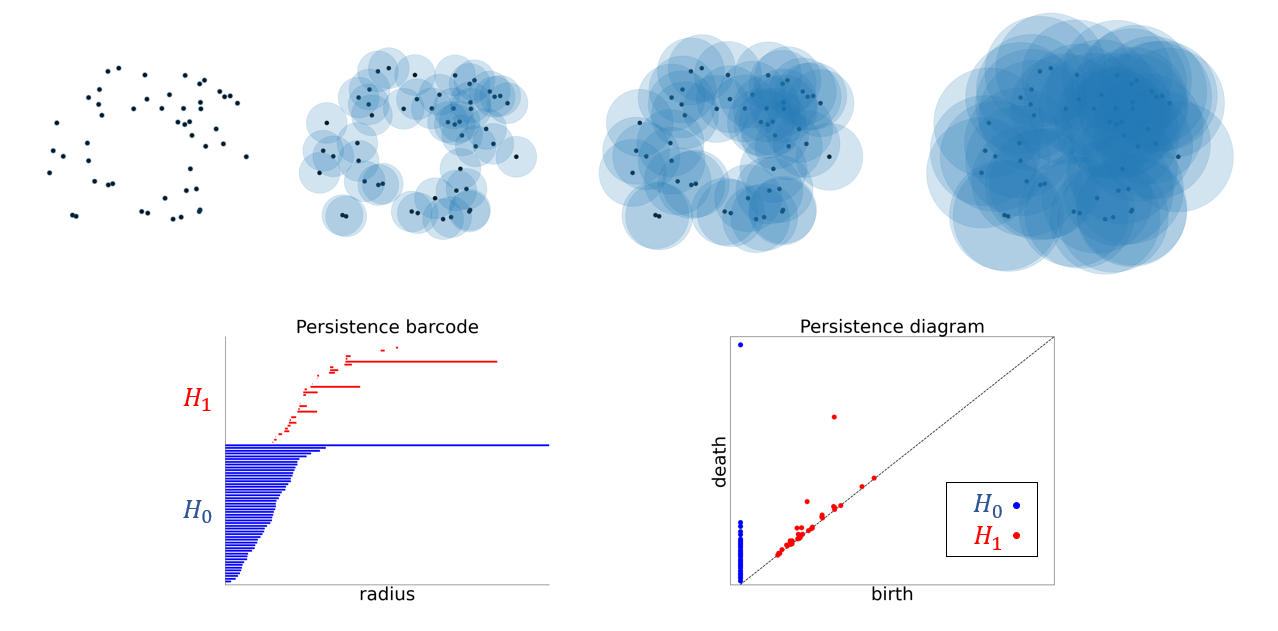}
    \caption{Persistent homology -- example. Top: a continuous filtration induced by inflating the ball cover around a point cloud. Bottom left: The resulting barcode summary for $\PH_0$ and $\PH_1$. Bottom right: The resulting persistence diagram. This example highlights the intuition that long bars (or points far from diagonal) stand for the topological features of the space underlying the point cloud. }
    \label{fig:Pers_hom_exmp}
\end{figure}

An example where persistent homology is used is in the context of geometric complexes. Here, the filtration parameter is the radius $r$, and the persistent homology provides a summary for all the cycles that appear at different scales. Given point cloud data, we can compute the persistent homology of either the \cech or the Rips filtration in order to extract information about the topological space underlying the data.

\section{The Coupled Alpha Complex}
\label{sec:coup_alpha}

In this section, we introduce the \emph{coupled alpha complex} denoted $\cA^\mathrm{co}_r(\cX,\cY)$, where $\cX,\cY\subset\Rd$ are finite sets.
We will define this complex in such a way that it meets the following requirements,
$$\cA_r(\cX)\subset \cA^\mathrm{co}_r(\cX,\cY) \supset\cA_r(\cY),\quad\text{and}\quad\cA^\mathrm{co}_r(\cX,\cY)\simeq\cA_r(\cX\cup \cY).$$
In other words, this complex includes both alpha complexes of each of the sets separately, and is homotopy equivalent to the alpha (or \v{C}ech) complex over their union. In addition, we will show that it maintains low computational costs compared to the \cech and Rips complexes over the union of points.

The \emph{coupled alpha complex} is constructed using the same building blocks as the alpha complex. 
Recall that given a subset $\cX\subset\Rd$, the alpha complex $\cA_r(\cX)$ is defined as the nerve of the Voronoi cells $\vor(x,\cX)$.
The coupled alpha complex is defined using two different Voronoi tessellations, each related to a different set of points, denoted $\cX$ and $\cY$ (see Figure \ref{fig:coupled_alpha_def}). 
The formal definition of this complex is the following.

\begin{definition} [Coupled Alpha Complex]\label{defn:coupled}
Let $\cX,\cY\subset\mathbb{R}^d$ be a pair of finite subsets and let $r\geq 0$. 
Define the following sets
$$\mathcal{V}_r^\cX = \{\vor_r(x,\cX)\}_{x\in \cX},\ \mathcal{V}_r^\cY = \{\vor_r(y,\cY)\}_{y\in \cY},$$
$$\mathcal{V}_r=\mathcal{V}_r^\cX\cup\mathcal{V}_r^\cY.$$
The coupled alpha complex generated by $\cX$ and $\cY$ with parameter $r$ is defined as
$$\cA^\mathrm{co}_r(\cX,\cY) := \cN(\mathcal{V}_r).$$
\end{definition}
Note that the (standard) alpha complex on $\cX\cup\cY$ is given by taking the nerve of $\mathcal{V}_r^{\cX\cup\cY}$.

\begin{figure}[h]
    \centering
    \includegraphics[scale=0.6]{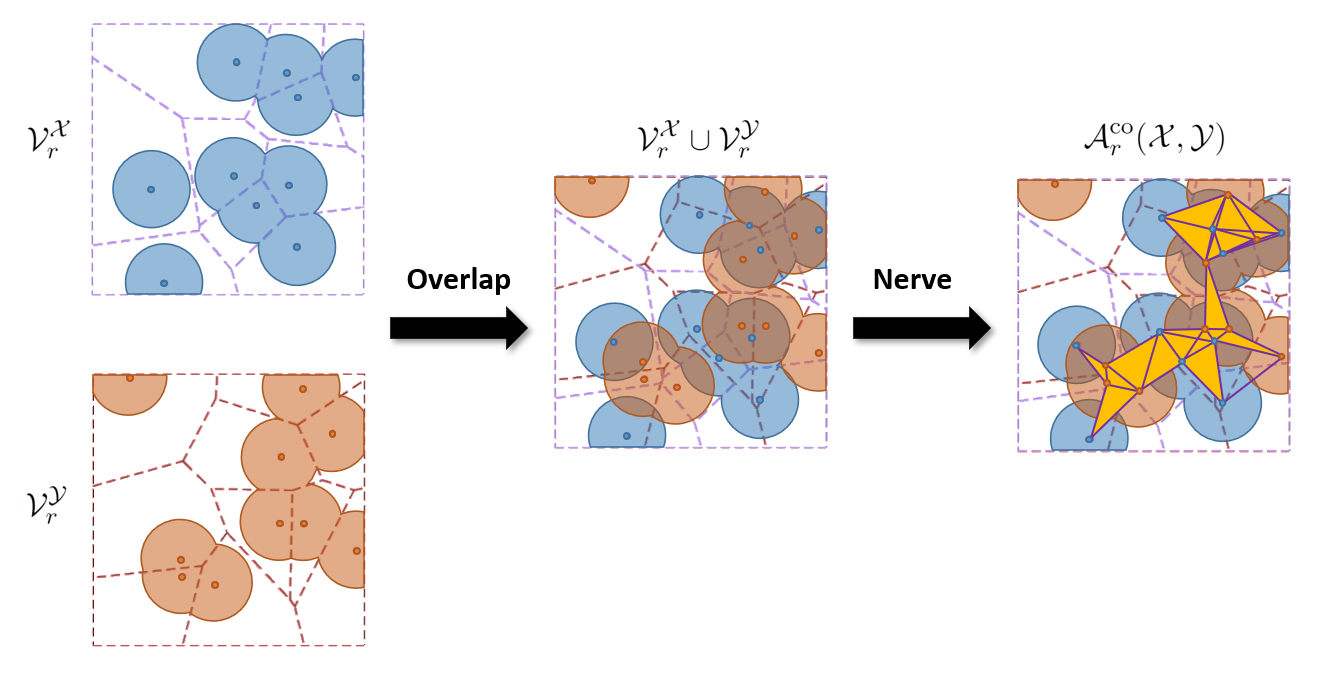}
    \caption{Coupled alpha complex definition. $\cX,\cY\subset\mathbb{R}^2$ and $\mathcal{V}_r^{\cX},\mathcal{V}_r^{\cY}$ are their corresponding Voronoi balls of radius $r$ (left). We ``overlap'' the sets (middle) and then take the nerve in order to get $\cA^\mathrm{co}_r(\cX,\cY)$ (right).}
    \label{fig:coupled_alpha_def}
\end{figure}

\begin{remark}
For points in general position in $\R^d$, the dimension of the alpha complex is always $d$. However, it is important to notice that the coupled alpha complex is $(d+1)$-dimensional. This stems, for example, from configurations where the intersection point of $d$ Voronoi cells in $\cX$ lies in the interior of a Voronoi cell in $\cY$. Nonetheless, by the nerve lemma we have $H_d,H_{d+1}=0$. 
\end{remark}

The definition above implies that the following inclusion relations hold,
$$\cA_r(\cX)\hookrightarrow \cA^\mathrm{co}_r(\cX,\cY) \hookleftarrow\cA_r(\cY).$$
In addition, by the Nerve Lemma \ref{lemma:nerve} we have the following.

\begin{lem}
Let $\cX,\cY$ be a pair of finite subsets of $\mathbb{R}^d$ and let $r\geq 0$. Then,
$$\cA^\mathrm{co}_r(\cX,\cY)\simeq \cA_r(\cX\cup\cY).$$
\end{lem}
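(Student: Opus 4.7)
The plan is to show both complexes are homotopy equivalent to the same ball cover, namely $U := \bigcup_{z\in \cX\cup\cY} B_r(z)$, by applying the Nerve Lemma (Lemma \ref{lemma:nerve}) to two different good covers of $U$.

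First I would handle the standard side. Corollary \ref{cor: cech_nerve} together with the homotopy equivalence $\cA_r(\cX\cup\cY)\simeq \Cech_r(\cX\cup\cY)$ noted after Definition \ref{def:alpha} gives $\cA_r(\cX\cup\cY)\simeq U$ directly. (Equivalently, one can apply the Nerve Lemma to the cover $\mathcal{V}_r^{\cX\cup\cY}=\{\vor_r(z,\cX\cup\cY)\}_{z\in\cX\cup\cY}$ of $U$, using that each Voronoi ball is convex and hence all nonempty intersections are contractible.)

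Next I would show $\cA^\mathrm{co}_r(\cX,\cY)=\cN(\mathcal{V}_r)\simeq U$. The two facts to verify are (i) $\bigcup \mathcal{V}_r = U$, and (ii) every nonempty intersection of elements of $\mathcal{V}_r$ is contractible. For (i), observe that the Voronoi cells $\{\vor(x,\cX)\}_{x\in\cX}$ tile $\R^d$, so
\[
\bigcup_{x\in\cX}\vor_r(x,\cX)=\bigcup_{x\in\cX}\bigl(B_r(x)\cap \vor(x,\cX)\bigr)=\bigcup_{x\in\cX}B_r(x),
\]
and similarly for $\cY$, so the union of $\mathcal{V}_r$ is exactly $U$. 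For (ii), each $\vor_r(x,\cX)=B_r(x)\cap\vor(x,\cX)$ is the intersection of two convex sets (a ball and a Voronoi cell) and hence is convex; any finite intersection of sets in $\mathcal{V}_r$ is therefore convex, so it is either empty or contractible. The Nerve Lemma then yields $\cN(\mathcal{V}_r)\simeq U$, and combining with the first step gives $\cA^\mathrm{co}_r(\cX,\cY)\simeq \cA_r(\cX\cup\cY)$.

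There is no real obstacle here; the only subtle point is making sure that elements from the two different Voronoi tessellations still produce convex (hence contractible) intersections, which follows immediately because convexity is preserved under arbitrary intersections regardless of which tessellation each factor comes from. The argument is essentially a double application of the Nerve Lemma with a convexity check bridging the two covers.
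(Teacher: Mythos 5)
Your proposal is correct and follows essentially the same route as the paper: both arguments apply the Nerve Lemma to the cover $\mathcal{V}_r$, using convexity of the Voronoi balls to verify the good-cover condition, and identify the resulting union with $\bigcup_{z\in\cX\cup\cY}B_r(z)\simeq\cA_r(\cX\cup\cY)$. Your explicit check that $\bigcup_{x\in\cX}\vor_r(x,\cX)=\bigcup_{x\in\cX}B_r(x)$ via the tiling property is a detail the paper leaves implicit, but the argument is the same.
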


\begin{proof}
The elements of $\mathcal{V}_r$ are all convex sets (as intersections of convex sets). 
Hence, by the Nerve Lemma \ref{lemma:nerve}
$$\cN(\mathcal{V}_r)\simeq \bigcup_{x\in \cX} \vor_r(x,\cX)\bigcup_{y\in \cY} \vor_r(y,\cY)=\bigcup_{z\in \cX\cup \cY}B_r(z),$$
and $\bigcup_{z\in \cX\cup \cY}B_r(z) \simeq \cA_r(\cX\cup\cY)$.
\end{proof}

In conclusion, we have the following relations, where the new complex $\cA^\mathrm{co}_r(\cX,\cY)$ substitutes $\cA_r(\cX\cup \cY)$. The dashed arrows represent the  missing inclusion relations between $\cA_r(\cX),\cA_r(\cY)$ and $\cA_r(\cX\cup \cY)$ that are replaced by the inclusion maps into $\cA^\mathrm{co}_r(\cX,\cY)$.
\[
\centering
\begin{tikzcd}
\cC_r(\cX) \arrow[r,hook,"i"] & \cC_r(\cX\cup \cY) &\arrow[l, hook',  "i" above]  \cC_r(\cY)\\
\cA_r(\cX) \arrow[dr, dashed] \arrow[u,hook,"\simeq"] \arrow[r,hook,"i"] 
	& \cA^\mathrm{co}_r(\cX, \cY)  \arrow[u,hook, "\simeq"] & 
	    \arrow[dl,dashed] \cA_r(\cY) \arrow[u,hook,"\simeq"] \arrow[l, hook', "i" above]\\
	& \cA_r(\cX\cup \cY) \arrow[u,hook, "\simeq" left] & 
\end{tikzcd}
\]

\section{Constructing the Coupled Alpha Complex} 
\label{sec:construct}
In the following section we provide an algorithm for calculating the coupled alpha filtration $\{\cA^\mathrm{co}_r(\cX,\cY)\}_{r\geq 0}$. The computation is divided into two steps. We start by finding $\cA^\mathrm{co}_{\infty}(\cX,\cY)$, i.e.~the set of all possible simplexes that may appear in the coupled alpha filtration. Then, we calculate the filtration values for each possible simplex.

\subsection{Computing $\coA_{\infty}(\cX,\cY)$}
Assume that $\cX$ and $\cY$ are assigned with a given ordering.
The key idea is to lift these sets from $\R^d$ to $\Rdd$ in such a way that the lifted sets $\hat\cX$ and $\hat\cY$ lie in parallel hyperplanes.
We proceed by computing the regular $(d+1)$-dimensional Delaunay triangulation of $\hat\cX\cup\hat\cY$ and its induced simplicial complex denoted $\cD(\hat\cX\cup \hat\cY)$.  Finally, we show that $\cD(\hat\cX\cup \hat\cY)$ is isomorphic to $\cA^\mathrm{co}_\infty(\cX,\cY)$.

Let $\pi:\mathbb{R}^{d+1}\to\mathbb{R}^d$ be the natural projection, i.e.~$(x_1,...,x_d,x_{d+1})\overset{\pi}{\mapsto} (x_1,...,x_d)$. Let $Q=[v_0,...,v_n]$ be a simplex such that $\{v_0,...,v_n\}\subset\mathbb{R}^{d+1}$. 
The projected simplex of $Q$, denoted by $\pi(Q)$, is defined by:
\begin{equation}\label{eqn:pi_Q}
\pi(Q)=\pi([v_0,...,v_n])=[\pi(v_0),...,\pi(v_n)].
\end{equation}
Let $(\cX,\cY)$ be an ordered pair of finite subsets of $\mathbb{R}^d$. Define $\hat{\mathcal{X}}\triangleq \cX \times\{0\}$ and $\hat{\mathcal{Y}}\triangleq \cY\times\{1\}$.
\begin{figure}
	\centering 	
	\includegraphics[scale=0.6]{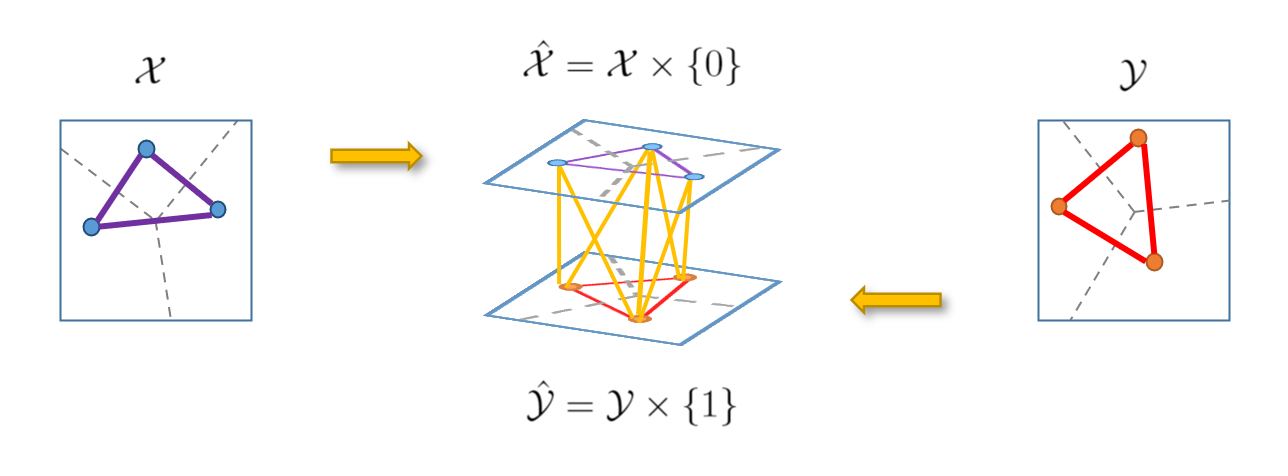}
	\caption{Computing $\cA^\mathrm{co}_{\infty}(\cX,\cY)$. Two sets of points in $\mathbb{R}^2$ are embedded in two parallel hyperplanes in $\mathbb{R}^3$. The triangulation in $\mathbb{R}^3$ restricted to one of the sets equals to its triangulation in $\mathbb{R}^2$.
    }
	\label{fig:A_inf_comp}
\end{figure}
Throughout, we will assume that the sets $\cX,\cY\subset \R^d$ are in \emph{coupled general position}, defined as follows.
\begin{definition}[Coupled General Position] \label{def:coup_gen_pos}
Let $P_1,P_2\subset\Rd$ be two finite sets and define $\hat{P_1}=P_1\times\{0\}$ and $\hat{P_2}=P_2\times\{1\}$.
$P_1$ and $P_2$ are said to be in \emph{coupled general position} if: \begin{enumerate}
    \item $P_1$ and $P_2$ are in general position (in $\Rd$), and 
    \item for every $\hat{Q}_1\subset\hat{P}_1$ and $\hat{Q}_2\subset\hat{P}_2$ such that $|\hat{Q}_1\cup\hat{Q}_2|=d+2$, no point of $(\hat{P}_1\setminus\hat{Q}_1)\cup(\hat{P}_2\setminus \hat{Q}_2)$ lies on the circumsphere of $\hat{Q}_1\cup\hat{Q}_2$.
\end{enumerate}   
\end{definition}
This assumption implies that the vertices of the Voronoi tessellation of $\hat\cX\cup\hat\cY$ are at the intersection of exactly $d+2$ Voronoi cells (see the proof of Proposition \ref{prop:1} below).
Note that for generic \emph{random} point processes, this assumption holds with probability $1$. This is true, since for any $Q\subset\hat\cX\cup\hat\cY$ of size $d+2$ that contains points in both sets $\hat\cX,\hat\cY$, the probability that a point $p\in\hat\cX\cup\hat\cY\setminus Q$ lies on the circumsphere of $Q$ is zero (since the circumsphere is a set of $0$-measure).

In the following, we argue that the coupled general position assumption is sufficient for the Delaunay triangulation $\cD(\hat\cX\cup \hat\cY)$ to be uniquely-defined.
Recall that for points in general position the Delaunay triangulation is unique. From Definition \ref{def:gen_pos}, taking $P=\hat{\cX}\cup\hat{\cY}$, the conditions hold for sets $Q$ that are composed of points from both sets $\hat\cX$ and $\hat\cY$ by the coupled general position assumption. 
However, general position is violated,  
for $Q\subset\hat\cX$ (or $Q\subset\hat\cY$) of size $(d+2)$, as such sets lie on a $d$-dimensional flat (the containing hyperplane). 
However, these sets are ignored to get a valid triangulation - a subdivision of the convex hull of $\hat\cX\cup\hat\cY$ into $(d+1)$-simplexes that form a simplicial complex - which is unique and dual to the Voronoi diagram of $\hat\cX\cup\hat\cY$.

\begin{algorithm}
	\caption{Computing $\cA^\mathrm{co}_{\infty}(\cX,\cY)$}
	\label{alg: 1}
	\SetKwInOut{Input}{input}\SetKwInOut{Output}{output} 
	\Input{ $\cX$ and $\cY$. \\}
	\Output{ $\pi(\cD)=\cA^\mathrm{co}_\infty(\cX,\cY)$.}
	\BlankLine
	$\hat\cX\leftarrow \cX\times\{0\}$\;
	$\hat\cY\leftarrow \cY\times\{1\}$\;
	$\cD\leftarrow$ \textbf{delaunayTriangulation}($\hat\cX\cup\hat\cY$)\;
	\textbf{return} $\pi(\cD)$\;
\end{algorithm}

\begin{prop} 
\label{prop:1}
Let $(\cX,\cY)$ be an ordered pair of finite subsets of $\mathbb{R}^d$. 
Then,
$$\cA^\mathrm{co}_\infty(\cX,\cY) = \pi(\cD(\hat\cX\cup\hat\cY)),$$
where the right-hand-side is the abstract simplicial complex generated by projecting the coordinates of the faces of $\cD(\hat\cX\cup\hat\cY)$ using $\pi(\cdot)$ \eqref{eqn:pi_Q}.
\end{prop}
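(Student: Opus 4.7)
The plan is to work directly with the Voronoi tessellation in $\Rdd$ induced by $\hat\cX \cup \hat\cY$, and show that the face lattice of $\cD(\hat\cX \cup \hat\cY)$ (which is dual to this tessellation) corresponds, via the projection $\pi$, to the nerve $\cA^\mathrm{co}_\infty(\cX, \cY) = \cN(\mathcal V_\infty^\cX \cup \mathcal V_\infty^\cY)$. The core computation is to record how squared distances decompose under the lifting: for $p=(q,t)\in\Rdd$,
\[
\|p-\hat x\|^2 = \|q-x\|^2 + t^2, \qquad \|p-\hat y\|^2 = \|q-y\|^2 + (t-1)^2,
\]
for $\hat x=(x,0)\in\hat\cX$ and $\hat y=(y,1)\in\hat\cY$. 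From this, the ``pure'' inequalities (comparing two $\hat x$'s or two $\hat y$'s) reduce to the planar Voronoi inequalities, while the ``mixed'' inequalities (comparing an $\hat x$ with an $\hat y$) become a two-sided affine constraint on $t$.

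For the forward inclusion $\pi(\cD(\hat\cX\cup\hat\cY)) \subseteq \cA^\mathrm{co}_\infty(\cX,\cY)$, I would take $\hat Q = \hat Q_\cX \sqcup \hat Q_\cY \in \cD(\hat\cX\cup\hat\cY)$, which by the Voronoi--Delaunay duality implies that some $p=(q,t)$ lies in every $\vor(\hat v, \hat\cX \cup \hat\cY)$ for $\hat v \in \hat Q$. The pure inequalities above then immediately give $q \in \vor(x,\cX)$ for every $x \in Q_\cX$ and $q\in \vor(y,\cY)$ for every $y\in Q_\cY$, so $\pi(\hat Q) \in \cA^\mathrm{co}_\infty(\cX,\cY)$.

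The reverse inclusion $\cA^\mathrm{co}_\infty(\cX,\cY) \subseteq \pi(\cD(\hat\cX\cup\hat\cY))$ is the substantive step. Starting from $Q = Q_\cX \cup Q_\cY \in \cA^\mathrm{co}_\infty$, pick $q$ in the nonempty coupled Voronoi intersection, and set $R_\cX = \|q-x_0\|$ for any $x_0 \in Q_\cX$ and $R_\cY = \|q-y_0\|$ for any $y_0 \in Q_\cY$ (both well-defined since $q$ is equidistant from the respective generators). The construction is then to choose $t = (1+R_\cY^2 - R_\cX^2)/2$, which is exactly the value forcing $\|p-\hat x\|^2 = \|p-\hat y\|^2 = R_\cX^2 + t^2$ for all $\hat x \in \hat Q_\cX,\ \hat y \in \hat Q_\cY$. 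Having arranged equidistance within $\hat Q$, one verifies that every other point of $\hat\cX \cup \hat\cY$ is weakly farther: for $x' \in \cX\setminus Q_\cX$ one has $\|q-x'\|^2 \ge R_\cX^2$, and for $y'\in \cY\setminus Q_\cY$ one has $\|q-y'\|^2 \ge R_\cY^2$, and adding $t^2$ or $(t-1)^2$ preserves the inequality by the same algebra. The edge cases $Q_\cY = \emptyset$ or $Q_\cX = \emptyset$ are handled by sending $t \to -\infty$ or $t\to+\infty$ respectively, which leaves the pure inequalities intact and makes the mixed inequalities vacuously loose.

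The main obstacle, and what the coupled general position assumption is there for, is to ensure that the candidate $(q,t)$ is actually a legitimate vertex of the $(d+1)$-dimensional Voronoi tessellation (i.e.\ no degenerate co-sphericity that would force extra points into the circumsphere and shrink $\hat Q$), and that $\cD(\hat\cX\cup\hat\cY)$ is unambiguously defined in the first place despite each $\hat\cX$ and $\hat\cY$ lying on a common hyperplane. Under Definition~\ref{def:coup_gen_pos}, once $t$ is fixed as above the remaining inequalities become strict for $\hat w \notin \hat Q$, which is the statement that $\hat Q$ is a face of the Delaunay complex. Beyond that, everything else is bookkeeping: the equivalence of ``Voronoi cells intersect'' with ``empty circumsphere'' is exactly the standard dual characterization used in Proposition~\ref{prop: Del_eq_alpha}.
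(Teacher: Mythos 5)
Your proposal is correct and follows essentially the same route as the paper's proof: both directions rest on the decomposition $\|(q,t)-\hat x\|^2=\|q-x\|^2+t^2$, $\|(q,t)-\hat y\|^2=\|q-y\|^2+(t-1)^2$, with the witness height $t=\tfrac12(1+R_\cY^2-R_\cX^2)$ (the paper's $t=\tfrac12(\|y-z\|^2-\|x-z\|^2+1)$) making the lifted point equidistant from $\hat Q$ and weakly closer to it than to all other lifted points. Your treatment of the one-sided case by pushing $t\to\mp\infty$ is just a limiting version of the paper's explicit choice $t^*=\min_{y\in\cY}\tfrac12(\|y-z\|^2-\|x-z\|^2+1)$, so the arguments coincide.
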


\begin{proof}
First we show that if $Q\in\cA^\mathrm{co}_\infty(\cX,\cY)$ then $Q\in\pi(\cD(\hat\cX\cup\hat\cY))$. 
Denote by $Q_{\cX}=Q\cap \cX$ and by $Q_{\cY}=Q\cap {\cY}$, and assume that both $Q_{\cX}$ and $Q_{\cY}$ are not empty. 
Since $Q\in\cA^\mathrm{co}_{\infty}(\cX,\cY)$
$$I=\bigcap_{x\in Q_\cX}\vor(x,\cX)\bigcap_{y\in Q_\cY} \vor(y,\cY)\neq\emptyset.$$
Let $h_0,h_1:\Rd\rightarrow\Rdd$ be such that $h_0(z)=(z,0)$ and $h_1(z) = (z,1)$.
In order to show that $Q\in\pi(\cD(\hat\cX\cup\hat\cY))$, we need to show that the following condition holds.
$$\hat I=\bigcap_{x\in Q_\cX} \vor\big(h_0(x),\hat\cX\big)\bigcap_{y\in Q_\cY} \vor\big(h_1(y),\hat\cY\big)\neq\emptyset.$$
Define $s=(z,t)\in\Rdd$ such that $z\in I$ and $t=\dfrac{1}{2}(\|y-z\|^2-\|x-z\|^2+1)$ for arbitrary $x\in Q_\cX$ and $y\in Q_\cY$.
For this choice of $s$ we have that
for all $x'\in Q_\cX$ and $y'\in Q_\cY$,
$$\|h_0(x')-s\|^2=\|h_1(y')-s\|^2,$$
and for every $x'\in Q_\cX$ and $y''\in \cY\setminus Q_\cY$
$$\|h_0(x')-s\|^2=\|h_1(y')-s\|^2=\|y'-z\|^2+(1-t)^2\leq \|y''-z\|^2+(1-t)^2=\|h_1(y'')-s\|^2$$
where $y'\in Q_\cY$ is an arbitrary point.
Similarly, for every $y'\in Q_\cY$ and $\forall x''\in \cX\setminus Q_\cX$
$$\|h_1(y')-s\|^2\leq \|h_0(x'')-s\|^2.$$
Hence, $s\in\hat I$ which implies that $\hat I$ is non-empty and therefore $Q\in\pi(\cD(\hat\cX\cup\hat\cY))$.

So far we assumed that $Q_\cX,Q_\cY$ are non-empty. Next, assume without loss of generality, that $Q_\cY$ is empty. In that case 
$$I=\bigcap_{x\in Q}\vor(x,\cX)\text{ and } \hat I=\bigcap_{x\in Q}\vor(h_0(x),\hat\cX).$$
Define $s=(z,t^*)$ such that $z\in I$ and $t^*=\min_{y\in \cY}\dfrac{1}{2}(\|y-z\|^2-\|x-z\|^2+1)$ for an arbitrary point $x\in Q$.
For this choice of $s$ we get that for all $x'\in Q$ and $y\in \cY$
$$t^*\leq \frac{1}{2}(\|y-z\|^2-\|x'-z\|^2+1)$$
implying that,
$$ \|x'-z\|^2 +(t^*)^2\leq \|y-z\|^2 +1 - 2t^* +(t^*)^2$$
which leads to
$$ \|h_0(x')-s\|^2\leq \|h_1(y)-s\|^2.$$
Hence, $s\in \hat I$ which completes the proof of the first direction.

Next, we need to show that if $Q\in \pi(\cD(\hat\cX\cup\hat\cY))$ then $Q\in \cA^\mathrm{co}_\infty(\cX,\cY)$.
Denote by $\hat Q=h_0(Q_\cX)\cup h_1(Q_\cY)$, i.e., the simplex that is obtained by applying $h_0$ to the points of $Q_\cX$ and $h_1$ to the points of $Q_\cY$.
By assumption, there exists $s=(s_1,...,s_{d+1})\in \Rdd$ such that $\forall q,q'\in \hat Q,\ and\ w\in \hat \cX\cup\hat\cY$
$$\|q-s\|^2=\|q'-s\|^2,\quad\|q-s\|^2\leq\|w-s\|^2.$$
Denote $z=\pi(s)$. If $q\in \hat Q\cap\hat\cX$, then $\forall w\in\hat\cX$
$$\|\pi(q)-z\|^2 = \|q-s\|^2 - s_{d+1}^2 \leq \|w-s\|^2 - s_{d+1}^2= \|\pi(w)-z\|^2,$$
hence,
$$z\in\bigcap_{q\in\hat \cX\cap \hat Q}\vor(\pi(q),\cX)=\bigcap_{x\in Q_\cX}\vor(x,\cX).$$
Similarly, we can show
$$z\in\bigcap_{q\in\hat\cY\cap \hat Q}\vor(\pi(q),\cY)=\bigcap_{y\in Q_\cY}\vor(y,\cY)$$
Thus, $z\in \bigcap_{x\in Q_\cX}\vor(x,\cX)\bigcap_{y\in Q_\cY} \vor(y,\cY)$, which completes the proof.
\end{proof}

To conclude, in this section we showed that Algorithm \ref{alg: 1} produces the complex $\cA^\mathrm{co}_{\infty}(\cX,\cY)$.

\subsection{Computing filtration values }
\label{sec: filt_vals}
Once we computed $\cA^\mathrm{co}_\infty(\cX,\cY)$,
we proceed to determining the filtration value attached to each simplex. 
Recall that in the alpha complex, the filtration value attached to each simplex is given by the radius of the minimal circumsphere that does not contain any other points in its interior \eqref{eq: alpha_filt}. We will define the filtration value of a simplex $Q\in \cA^\mathrm{co}_\infty(\cX,\cY)$ in a similar fashion. Let $Q_\cX = Q\cap \cX$ and $Q_\cY=Q\cap \cY$. We define the filtration value of $Q$ to be:
\begin{equation}
\label{eq: r_Q}
    r(Q) = \inf\Big\{r\geq 0 : \bigcap_{q\in Q_\cX}\Big(\vor(q,\cX)\cap B_r(q)\Big)\bigcap_{q\in Q_\cY}\Big(\vor(q,\cY)\cap B_r(q)\Big)\neq\emptyset \Big\}.
\end{equation}
The optimization problem stated in \eqref{eq: r_Q} can be translated to a problem where the \emph{unknown variable} is the center (in $\Rd$) of two \emph{concentric} $d$-dimensional open balls $B_\cX$ and $B_\cY$ that are empty in the following sense.
The first ball $B_\cX$ includes the points $Q_\cX$ on its boundary and it does not include any other point of $\cX$ in its interior. Similarly, the second ball $B_\cY$ includes the points $Q_\cY$ on its boundary and it does not include any other point of $\cY$ in its interior.
In general, there can be an infinite number of such pairs of balls (see Figure \ref
{fig:spheres_pairs}).

For each such pair we can define $r_{\max}$ to be the radius of the bigger ball. Then, we set $r(Q)=\min r_{\max}$, where the minimum is taken over all possible pairs  $(B_{\cX},B_\cY)$. 
In the following we state the optimization problem whose solution is the filtration value of a given simplex, and present algorithm for computing the minimizer of this problem. 

\begin{figure}
	\centering 	
	\includegraphics[scale=0.35]{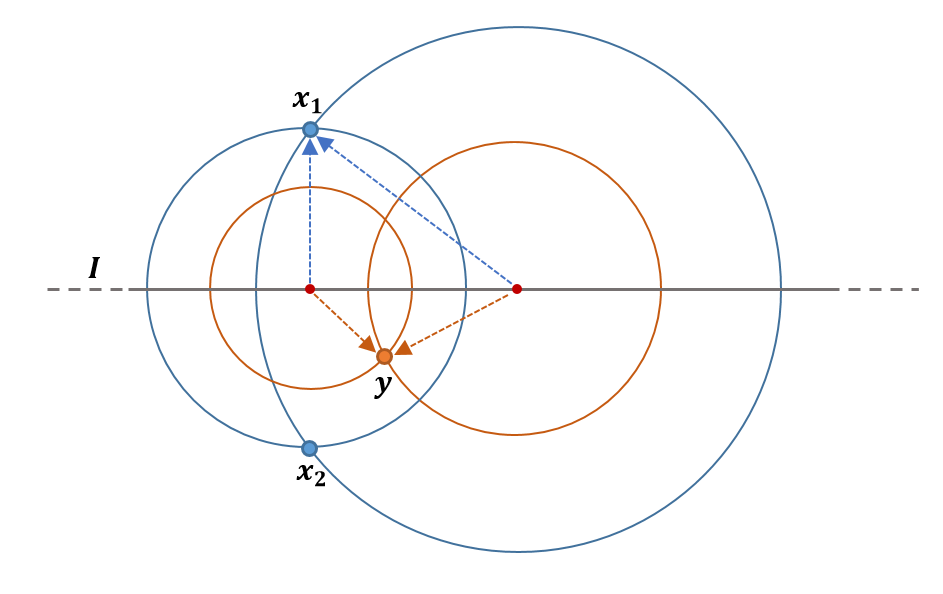}
	\caption{Given two sets $\cX=\{x_1,x_2\}$ and $\cY=\{y\}$, there is an infinite number of pairs of concentric spheres with center in $I=\vor(x_1,\cX)\cap\vor(x_1,\cX)\cap\vor(y,\cY)$, such that one includes $\cX$ and the other one includes $\cY$. However, there is only one pair with minimal $r_{\max}$ (the left pair in the image).
	}
	\label{fig:spheres_pairs}
\end{figure}

\subsubsection{Optimization problem}
\label{sec:opt_prob}
Let $Q\in \cA^\mathrm{co}_\infty(\cX,\cY)$, such that both $Q_\cX=\cX\cap Q$ and $Q_\cY=\cY\cap Q$ are not empty and assume that $|Q|=m+1$ and $|Q_\cX| = l$. 
Let 
\begin{equation}\label{eq:IXY}
I(Q,\cX):=\bigcap_{x\in Q_\cX}\vor(x, \cX),\quad\text{and}\quad I(Q,\cY):=\bigcap_{x\in Q_\cY}\vor(y, \cY).
\end{equation}
The filtration value of $Q$ is the solution for the following optimization problem:
\begin{equation}
\label{eq:opt1}
r^2(Q)=\min_{c\in I(Q,\cX)\cap I(Q,\cY)}\max\{\|c-x\|^2,\|c-y\|^2\},
\end{equation}
where $x\in Q_\cX$ and $y\in Q_\cY$ are two arbitrary points.

Note that in the case where $Q_\cY$ is empty, i.e. $Q_\cX=Q$, the solution for \eqref{eq:opt1} coincides with the filtration value of $Q$ as a simplex in the alpha complex of $\cX$ and vice versa.

The minimizer of (\ref{eq:opt1}) must lie in the intersection of the Voronoi cells of the vertices of $Q$.
This constraint makes the optimization problem hard to solve.
Instead, we will first solve the following relaxed version of the problem
\begin{equation}\label{eq:opt1_relaxed}
r_{\mathrm{rel}}^2(Q)=\min_{c\in I(Q,Q_\cX)\cap I(Q,Q_\cY)}\max\{\|c-x\|^2,\|c-y\|^2\},
\end{equation}
where we replaced $\cX,\cY$ with $Q_\cX, Q_\cY$, respectively. Note that $r_\mathrm{rel}(Q)\leq r(Q)$ since $I(Q,\cX)\cap I(Q,\cY)\subset I(Q,Q_\cX)\cap I(Q,Q_\cY)$.
The solution for the relaxed problem will play a central role in identifying the filtration value of $Q$, i.e.~the solution for (\ref{eq:opt1}), as we will see in the following.

\subsubsection{Relaxed optimization problem}
In the following, we reformulate the optimization problem \eqref{eq:opt1_relaxed}.
Assume an arbitrary ordering on $Q_\cX=\{x_1,...,x_l\}$ and $Q_\cY=\{y_1,...,y_{m-l+1}\}$. The set
$I(Q,Q_\cX)$ consists of all points $c\in\mathbb{R}^d$ that satisfy the following $l-1$ equations,
$$\|c-x_1\|^2=\|c-x_{i}\|^2,\quad\forall i\in\{2,...,l\}.$$
Similarly, $I(Q,Q_\cY)$ is the set of all points $c\in\mathbb{R}^d$ that  satisfy
$$\|c-y_1\|^2=\|c-y_i\|^2,\quad\forall i\in\{2,...,m-l+1 \}.$$
The solution we seek lies in $I(Q,Q_\cX)\cap I(Q,Q_\cY)$, hence, we combine the equations above into a single system. For simplicity, we write the equations in a matrix form.
Define,\[
A
=
\begin{bmatrix}
(x_2-x_1)^T \\
\vdots \\ 
(x_l-x_1)^T\\
(y_2-y_1)^T \\
\vdots \\
(y_{m-l+1}-y_1)^T
\end{bmatrix}
,\ 
c=
\begin{bmatrix}
c_1 \\
\vdots \\
c_{d}
\end{bmatrix}
,\ 
b
=
\frac{1}{2}
\begin{bmatrix}
\|x_2\|^2-\|x_1\|^2 \\
\vdots \\
\|x_l\|^2-\|x_1\|^2  \\
\|y_2\|^2-\|y_1\|^2 \\
\vdots \\
\|y_{m-l+1}\|^2-\|y_1\|^2 
\end{bmatrix}
.
\]
Note that since the points are in general position, the rows of $A$ are linearly independent, i.e.~$\mathrm{rank}(A)=m-1$.
Using the above notation, the optimization problem can now be written in the following way
\begin{equation}
\label{eq:opt2}
 r_{\mathrm{rel}}^2(Q) = \min_{c\in\Rd} \max\{\|c-x_1\|^2,\|c-y_1\|^2\},\quad\text{subject to }Ac=b.
\end{equation}
Note that for all $i,j$ we have $\|c-x_i\| = \|c-x_1\|$ and $\|c-y_j\| = \|c-y_1\|$, and the choice of $x_1,y_1$ in \eqref{eq:opt2} is arbitrary.
Using the fact that a solution for a linear system can be expressed as a sum of a solution for the homogeneous system and a particular solution, we reformulate the optimization problem in the following way,
\begin{equation}
\label{eq:opt3}
r_{\mathrm{rel}}^2(Q) = \min_{s\in\mathbb{R}^{d-m+1}} \max\{\|Fs+c_0-x_1\|^2,\|Fs+c_0-y_1\|^2\}, 
\end{equation}
where $c_0$ is a particular solution, i.e. $Ac_0=b$, and $F\in \mathbb{R}^{d\times d-m+1}$ is the matrix whose columns are an orthonormal basis of $\Ker(A)$.

We derive the solution in the following way.
Denote by $r^2(Q,s)$ the objective function in \eqref{eq:opt3}, then its derivative  with respect to $s$ is
\[ \frac{dr^2(Q,s)}{ds} =
	\begin{cases}
    		F^T(Fs+c_0-x_1)       & \quad \text{if } \|Fs+c_0-x_1\|^2 \geq \|Fs+c_0-y_1\|^2,\\
    F^T(Fs+c_0-y_1)  & \quad \text{if } \|Fs+c_0-x_1\|^2 < \|Fs+c_0-y_1\|^2.
  \end{cases}
\]
Recalling that $F$ consists of orthonormal columns, the minimizer of (\ref{eq:opt3}) is one of the following three terms,
$$s_{x}=F^T(x_1-c_0),$$
or
$$s_{y}=F^T(y_1-c_0),$$
or 
$$s_0 = F^{T}(C(Q)-c_0),$$
where $C(Q)$ is the center of the minimal circumsphere of $Q$ (in $\mathbb{R}^d$).

In practice, we compute the first two candidates $s_x,s_y$, measure their distances to $y_1$ and $x_1$ and then check which is the right solution according to the following claim. 
\begin{lem}
Denote
$$r_x(x_1) = \|Fs_x+c_0-x_1\|,$$
$$r_x(y_1) = \|Fs_x+c_0-y_1\|,$$
$$r_y(x_1) = \|Fs_y+c_0-x_1\|,$$
$$r_y(y_1) = \|Fs_y+c_0-y_1\|.$$
The solution for (\ref{eq:opt2}), denoted by $r_{\mathrm{rel}}(Q)$, is given by 
\[r_{\mathrm{rel}}(Q) = 
	\begin{cases}
    		r_x(x_1)       & \quad \text{if } r_x(x_1) \geq r_x(y_1),\\
    		r_y(y_1)       & \quad \text{if } r_y(x_1) \leq r_y(y_1),\\
    		R(Q)      & \quad \text{otherwise},\\
  \end{cases}
\]
where $R(Q)$ is the radius of the minimal circumsphere of $Q$ (in $\mathbb{R}^d$).
\end{lem}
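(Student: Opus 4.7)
The plan is to view the objective $\psi(s) := \max\{f_x(s), f_y(s)\}$ in \eqref{eq:opt3}, where $f_x(s) := \|Fs+c_0-x_1\|^2$ and $f_y(s) := \|Fs+c_0-y_1\|^2$, as the pointwise maximum of two convex quadratics, and exploit the fact that the columns of $F$ are orthonormal ($F^TF=I$). A direct gradient computation gives $\nabla f_x(s) = 2F^T(Fs+c_0-x_1) = 2(s - F^T(x_1-c_0))$, so $s_x$ is the unique unconstrained minimizer of $f_x$, and symmetrically $s_y$ minimizes $f_y$. Since $\psi$ is convex and coercive its infimum is attained, so the only task is to decide which of the three candidates $s_x,s_y,s_0$ realizes it, as a function of which regime we are in.

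First I would dispose of the two ``dominating'' cases. If $r_x(x_1) \geq r_x(y_1)$, then $f_x(s_x)\geq f_y(s_x)$, whence $\psi(s_x)=f_x(s_x)$; for any other $s$, $\psi(s)\geq f_x(s)\geq f_x(s_x)=\psi(s_x)$, so $s_x$ minimizes $\psi$ with value $r_x(x_1)$. The symmetric argument applied to $s_y$ handles the condition $r_y(y_1)\geq r_y(x_1)$ and yields $r_{\mathrm{rel}}(Q)=r_y(y_1)$.

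The substantive case is ``otherwise'': $f_x(s_x)<f_y(s_x)$ and $f_y(s_y)<f_x(s_y)$. I would show that any minimizer $s^*$ must then lie on the level set $L:=\{s : f_x(s)=f_y(s)\}$. Suppose for contradiction $f_x(s^*)>f_y(s^*)$. Since $f_x$ is convex with unique minimum $s_x$, moving a small distance from $s^*$ toward $s_x$ strictly decreases $f_x$; by continuity the inequality $f_y<f_x$ persists in a neighborhood of $s^*$, so $\psi=f_x$ near $s^*$ and thus $\psi$ strictly decreases, contradicting optimality. The reverse inequality is ruled out symmetrically. Hence $s^*\in L$, and the center $c^*:=Fs^*+c_0$ is equidistant from $x_1$ and $y_1$. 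Combined with the affine constraint $Ac^*=b$ (baked into the parametrization $c=Fs+c_0$), which forces equidistance from every point of $Q_\cX$ and from every point of $Q_\cY$, the point $c^*$ is equidistant from every vertex of $Q$. Minimizing the common squared distance over this equidistance affine subspace is exactly the problem solved by the circumcenter $C(Q)$ of $Q$ (unique by coupled general position), with value $R(Q)$. Finally, $AC(Q)=b$ gives $C(Q)-c_0\in\Ker(A)$, so $FF^T(C(Q)-c_0)=C(Q)-c_0$, confirming $s^*=s_0=F^T(C(Q)-c_0)$ and $r_{\mathrm{rel}}(Q)=R(Q)$.

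The main obstacle is this third case: proving that the minimizer is pinned to the equi-distance locus $L$ (a subdifferential / convex-analysis argument for a pointwise max) and then identifying the constrained minimum on $L$ with the minimal circumsphere of $Q$. The first step needs care to ensure the descent direction toward $s_x$ (or $s_y$) is actually feasible and keeps one quadratic strictly dominant; the second is a clean linear-algebra matching between the enlarged equidistance system and the circumcenter condition.
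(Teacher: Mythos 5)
Your proof is correct and follows essentially the same route as the paper's: convexity of the pointwise maximum, the two ``dominating'' cases settled by the unconstrained minimizers $s_x,s_y$, and the remaining case pinned to the equidistance locus $f_x=f_y$, which is then identified with the minimal circumsphere problem. Your write-up is in fact slightly more careful in the first two cases (the chain $\psi(s)\ge f_x(s)\ge f_x(s_x)=\psi(s_x)$) and in justifying why the minimizer must lie on the equality locus, but the underlying argument is the same.
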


\begin{proof} The proof relies on  the fact that the objective in \eqref{eq:opt2} is convex.  
There are three cases we should address. The first case is when $r_x(x_1)\geq r_x(y_1)$. 
Here, since $s_x$ is the global minimum of $\|Fs+c_0-x_1\|^2$ in $\R^{d-m+1}$, we have that $r_x(x_1) \le r_y(x_1)$, and similarly $r_y(y_1)\le r_x(y_1)$.
Therefore, we conclude that $r_y(x_1)\geq r_y(y_1)$, which implies that $r_{\mathrm{rel}}(Q)=r_x(x_1)$.  The second case, when $r_y(y_1)\geq r_y(x_1)$ is similar.
The last case is when both $r_x(x_1)<r_x(y_1)$ and $r_y(y_1)<r_y(x_1)$. Here, the minimum of (\ref{eq:opt3}) is a boundary point, i.e.~a value $s\in\mathbb{R}^{d-m+1}$ that satisfies
$$\|Fs+c_0-x_1\|^2=\|Fs+c_0-y_1\|^2$$
(otherwise, the minimizer must be either of $s_x$ and $s_y$, in contradiction to the assumption).
Under this constraint, (\ref{eq:opt3}) is exactly the optimization problem for finding the minimal circumsphere of $Q$. Hence, the minimizer in that case satisfies $C(Q)=Fs_*+c_0$, which gives $r_{\mathrm{rel}}(Q)=R(Q)$.
\end{proof}

\begin{figure}
	\label{fig:opt_sol}
	\centering
	\includegraphics[scale=0.35]{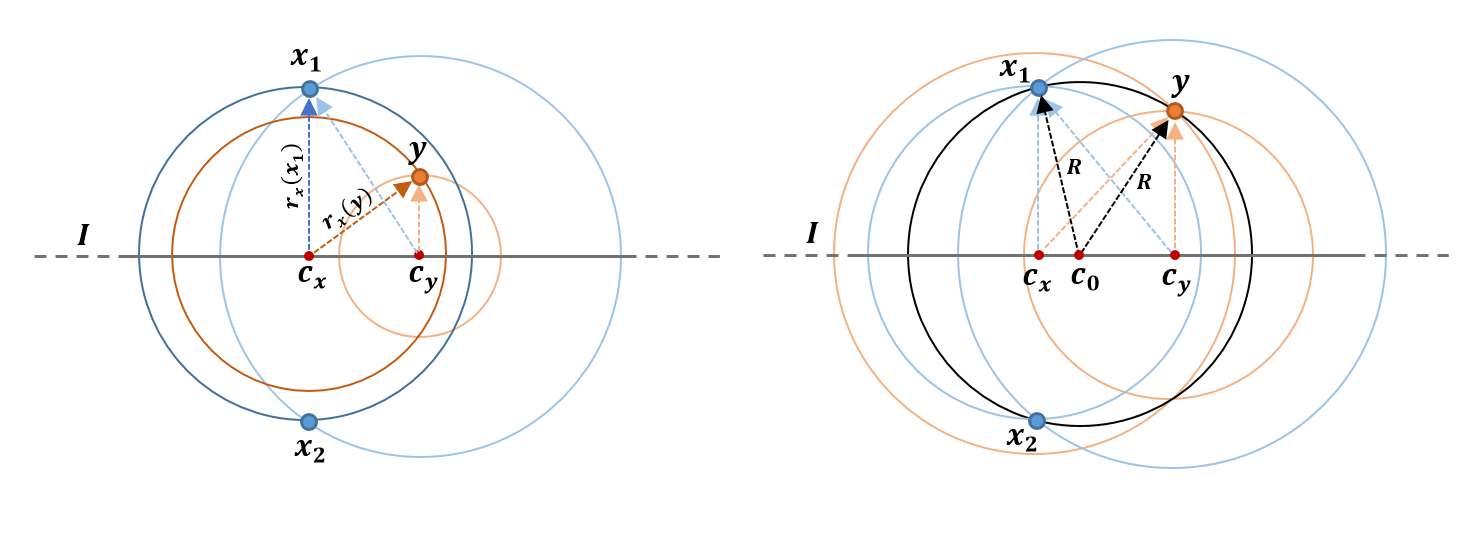}
	\caption{Possible solutions for the optimization problem (\ref{eq:opt2}), in $\mathbb{R}^2$, for the case $Q_\cX=\{x_1,x_2\}$ and $Q_\cY=\{y\}$. $I$ denotes the intersection $I(Q,Q_\cX)\cap I(Q,Q_\cY)$. The centers $c_x,c_y$ and $c_0$ are equal to $Fs_x,Fs_y$ and $Fs_0$ respectively, where $s_x,s_y$ and $s_0$ are the candidates for the minimizer of \eqref{eq:opt3}. The figure on the left depicts a case where $r_{\mathrm{rel}}(Q)=r_x(x_1)$. the figure on the right depicts a case where $r_{\mathrm{rel}}(Q)=R=R(Q)$. 
	}
\end{figure}

\subsubsection{Back to the non-relaxed problem}
Determining the solution for the non-relaxed problem (\ref{eq:opt1}) can be done in a top-down fashion, similarly to the calculation of the filtration values for the alpha complex \cite{boissonnat2018geometric}.
Adopting the terminology of \cite{boissonnat2018geometric},  we define the following condition.
 
Let $P=P_{\cX}\cup P_{\cY}\in \cA^\mathrm{co}_\infty(\cX,\cY)$ and let $Q=Q_{\cX}\cup Q_{\cY}$ be a  face of $P$ (of co-dimension $1$).
Let $c_*$ be the minimizer of \eqref{eq:opt1_relaxed} for $Q$.
Define $B_\cX$ to be the open ball with center $c_*$ and radius $\|c_*-x\|$ for arbitrary $x\in Q_\cX$, and let $S_\cX$ be its bounding sphere. Similarly define $B_\cY$ and $S_\cY$.

We say that the ordered pair $(P,Q)$ satisfies the \emph{coupled Gabriel condition} if 
both $B_\cX\cap P_\cX = \emptyset$, and $B_\cY\cap P_\cY = \emptyset$.
Let $U(Q)\subset\cA^\mathrm{co}_\infty(\cX,\cY)$ be the set of all co-faces of $Q$ (of co-dimension $1$). If for every $P\in U(Q)$, the pair $(P,Q)$ satisfies the coupled Gabriel condition, then we say that $Q$ is a \emph{coupled Gabriel simplex}.
This property will enable us to determine whether the solution of (\ref{eq:opt1_relaxed}) for $Q$ identifies with the solution of (\ref{eq:opt1}),
as we suggest in the following lemma (see Figure \ref{fig:coupled_gab}).

\begin{lem}\label{lem:filt_1}
Let $Q \in \cA^\mathrm{co}_\infty(\cX,\cY)$ be a coupled Gabriel simplex. Then, the filtration value of  $Q$ is given by $r(Q) = r_{\mathrm{rel}}(Q)$.
\end{lem}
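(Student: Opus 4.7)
The plan is to prove the equality by two inequalities. The direction $r_{\mathrm{rel}}(Q) \leq r(Q)$ is immediate, since the feasible set of \eqref{eq:opt1} is contained in that of \eqref{eq:opt1_relaxed} (replacing $\cX,\cY$ with $Q_\cX,Q_\cY$ only enlarges the Voronoi-cell intersections), so minimizing the same objective over a larger set can only decrease the value. The reverse inequality $r(Q) \leq r_{\mathrm{rel}}(Q)$ I would prove by contrapositive: assuming $r_{\mathrm{rel}}(Q) < r(Q)$, I would exhibit a co-face of $Q$ in $\coA_\infty(\cX,\cY)$ that violates the coupled Gabriel condition.

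Let $c_*$ be a minimizer of \eqref{eq:opt1_relaxed}. If $c_*$ happens to lie in $I(Q,\cX)\cap I(Q,\cY)$, then it is feasible for \eqref{eq:opt1}, so $r(Q) \leq r_{\mathrm{rel}}(Q)$ immediately. Hence under the assumption $r_{\mathrm{rel}}(Q) < r(Q)$, without loss of generality $c_* \notin I(Q,\cX)$ (the $\cY$ case is symmetric). This means there is some $x' \in \cX \setminus Q_\cX$ with $\|c_*-x'\| < \|c_*-x\|$ for every $x \in Q_\cX$, i.e., $x'$ lies strictly inside the open ball $B_\cX$.

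The main step is to construct a co-face $P = Q \cup \{x''\} \in \coA_\infty(\cX,\cY)$ for some point $x''$ which is also strictly inside $B_\cX$ (or $B_\cY$). I would do this by moving along the affine slice $I(Q,Q_\cX)\cap I(Q,Q_\cY)$ on the segment from a ``safe'' point $c_0 \in I(Q,\cX)\cap I(Q,\cY)$ (which exists because $Q \in \coA_\infty(\cX,\cY)$) toward $c_*$. For each candidate $p \in (\cX\setminus Q_\cX)\cup(\cY\setminus Q_\cY)$, the discriminant $\|c-z\|^2 - \|c-p\|^2$ (with $z\in Q_\cX$ or $z\in Q_\cY$ fixed on the same side as $p$) is an \emph{affine} function of $c$: non-negative at $c_0$ by Voronoi membership, and strictly negative at $c_*$ at least for $p = x'$. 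Taking the smallest parameter $t^* \in (0,1]$ at which any such discriminant first vanishes produces a point $\tilde c$ and a ``first witness'' $x''$ (or symmetrically $y''$) with $\tilde c \in I(Q,\cX)\cap I(Q,\cY)$ and $\|\tilde c - z\| = \|\tilde c - x''\|$. The latter equality puts $\tilde c$ on the (closed) Voronoi cell $\vor(x'',\cX)$, so $\tilde c \in I(P,\cX)\cap I(P,\cY)$ and $P \in \coA_\infty(\cX,\cY)$.

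This completes the contradiction: $P \in U(Q)$, yet by construction $x'' \in P_\cX \cap B_\cX$, violating the coupled Gabriel property of $Q$. The main obstacle is the ``first-hit'' step of the previous paragraph: one must combine the affine structure of the constraint region with the coupled general position assumption (Definition~\ref{def:coup_gen_pos}) to guarantee that the first crossing is unique so that a single co-face is produced, that the witness $x''$ remains strictly inside its ball at $c_*$ (so the coupled Gabriel violation is genuine), and that no Voronoi constraint on the opposite side has been broken before $t^*$. Once these are established, the contradiction with the coupled Gabriel hypothesis is immediate, and combined with the first paragraph gives $r(Q) = r_{\mathrm{rel}}(Q)$.
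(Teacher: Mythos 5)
Your proposal is correct, but it takes a genuinely different route from the paper. The paper argues directly: assuming $Q$ is coupled Gabriel, for every co-face $P=Q\cup\{v\}\in U(Q)$ with, say, $v\in\cX$, the condition gives $\|c_*-v\|\ge\|c_*-x\|$, and from this it concludes $c_*\in I(Q,\cX)\cap I(Q,\cY)$, so that the relaxed minimizer is already feasible for \eqref{eq:opt1} and the two values coincide. You instead prove the contrapositive: if $c_*\notin I(Q,\cX)\cap I(Q,\cY)$, you walk along the segment from a feasible point $c_0$ to $c_*$ inside the affine slice $I(Q,Q_\cX)\cap I(Q,Q_\cY)$, take the first Voronoi constraint to become tight, and thereby manufacture a co-face whose extra vertex lies strictly inside the relevant open ball, violating the coupled Gabriel condition. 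Your route is longer, but it buys something real: the paper's step from ``the inequality holds for every vertex $v$ that spans a co-face with $Q$'' to ``$c_*$ lies in the full Voronoi intersection'' is left implicit, since points of $\cX\setminus Q_\cX$ that do \emph{not} form a co-face with $Q$ are not directly controlled by the Gabriel condition; one must know that the co-face constraints are the only ones that can be active. Your first-hit argument is precisely the justification of that step (any violated Voronoi constraint forces a violated co-face constraint), and the details you flag as obstacles do go through: $t^*=1$ would contradict infeasibility of $c_*$, affinity of the (correctly signed --- note your ``non-negative at $c_0$'' should read non-positive for the discriminant $\|c-z\|^2-\|c-p\|^2$) constraint functions forces the first witness to lie strictly inside the ball at $c_*$, and uniqueness of the first crossing is not actually needed since any tight witness at $\tilde c$ yields a valid co-face. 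So your argument is a sound, and in one respect more complete, alternative to the paper's direct proof.
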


\begin{proof}
Our goal is to show that the solutions of  \eqref{eq:opt1} and  \eqref{eq:opt1_relaxed} for $Q$ are the same.
In the previous section we found  the minimizer $c_*$ for the relaxed problem \eqref{eq:opt1_relaxed}.
If we can show that $c_*\in I(Q,\cX)\cap I(Q,\cY)$, then $c_*$ also minimizes \eqref{eq:opt1}, and therefore $r(Q) = r_{\mathrm{rel}}(Q)$.

Let $P\in U(Q)$, and without loss of generality, suppose that $v = P\setminus Q \in \cX$. The coupled Gabriel condition implies that $\|c_*-v\| \ge \|c_*-x\|$ for an arbitrary $x\in Q_\cX$. Since this is true for every $P$ we conclude that $c_*\in \bigcap_{x\in Q_\cX}\vor(x, \cX)$.
Similarly, we can show that $c_*\in \bigcap_{y\in Q_\cY}\vor(y, \cY)$, concluding the proof.
\end{proof}

The following lemma provides a way to determine the filtration value of $Q$ in the case where the conditions of Lemma \ref{lem:filt_1} are not satisfied, i.e.~the coupled Gabriel condition does not hold for $Q$.

\begin{lem}
Let $Q\in\cA^\mathrm{co}_\infty(\cX,\cY)$, and suppose there exists $P\in U(Q)$ such that $(P,Q)$ does not satisfy the coupled Gabriel condition. Then, $$r(Q) = \min_{P\in U(Q)}r(P).$$
\end{lem}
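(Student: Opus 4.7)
The plan is to prove the two inequalities $r(Q)\leq\min_{P\in U(Q)}r(P)$ and $r(Q)\geq\min_{P\in U(Q)}r(P)$. The first, monotonicity direction, is immediate: $\coA_r(\cX,\cY)$ is a nerve of convex sets and hence a simplicial complex, so $Q$ enters the filtration at or before every one of its cofaces, giving $r(Q)\leq r(P)$ for all $P\in U(Q)$.

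For the reverse inequality I would exploit the failure of coupled Gabriel. Let $c_*$ denote the relaxed minimizer from \eqref{eq:opt1_relaxed}, i.e.~the minimizer of the strictly convex objective $\max\{\|c-x_1\|^2,\|c-y_1\|^2\}$ on the affine space $A:=I(Q,Q_\cX)\cap I(Q,Q_\cY)$, and let $F:=I(Q,\cX)\cap I(Q,\cY)\subset A$ denote the feasible set of the non-relaxed problem \eqref{eq:opt1}. The hypothesis supplies some $P\in U(Q)$ whose extra vertex $v=P\setminus Q$ (without loss of generality in $\cX$) satisfies $\|c_*-v\|<\|c_*-x\|$ for $x\in Q_\cX$, forcing $c_*\notin\vor(x,\cX)$ and hence $c_*\notin F$. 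Strict convexity of the objective on $A$ then implies that any non-relaxed minimizer $c^{**}$ of \eqref{eq:opt1} must lie on the relative boundary $\partial F$ of $F$ in $A$.

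Finally I would use the polyhedral structure of $F$: inside $A$, $F$ is cut out by the linear inequalities $\|c-x_1\|\leq\|c-w\|$ for $w\in\cX\setminus Q_\cX$ and $\|c-y_1\|\leq\|c-w\|$ for $w\in\cY\setminus Q_\cY$, so a boundary point $c^{**}\in\partial F$ activates at least one such inequality for some $w\in(\cX\cup\cY)\setminus Q$. Setting $P':=Q\cup\{w\}$, the equidistance together with feasibility of $c^{**}$ for $F$ yields $c^{**}\in I(P',\cX)\cap I(P',\cY)$, so $P'\in\coA_\infty(\cX,\cY)$ and $P'\in U(Q)$. Because $w$ sits at the same distance from $c^{**}$ as the other vertices of its class, the $P'$-objective at $c^{**}$ coincides with the $Q$-objective at $c^{**}$, which gives $r(P')\leq r(Q)$; combined with $r(Q)\leq r(P')$ from monotonicity, we obtain $r(Q)=r(P')\geq\min_{P\in U(Q)}r(P)$. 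The main obstacle will be the rigorous polyhedral-boundary step, and the handling of the degenerate cases $Q_\cX=\emptyset$ or $Q_\cY=\emptyset$, which reduce to the classical alpha-complex argument of \cite{boissonnat2018geometric}.
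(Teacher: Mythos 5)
Your proof is correct and follows essentially the same route as the paper's: both hinge on the fact that a minimizer of \eqref{eq:opt1} lying in the relative interior of $I(Q,\cX)\cap I(Q,\cY)$ would, by convexity, coincide with the relaxed minimizer $c_*$, which the failure of the coupled Gabriel condition rules out. The paper organizes this as a contradiction (assuming $r(Q)<\min_{P\in U(Q)}r(P)$ forces an interior minimizer), whereas you argue directly by pushing the constrained minimizer to the relative boundary and exhibiting the coface $P'=Q\cup\{w\}$ with $r(P')=r(Q)$ --- a reorganization that usefully makes explicit the constraint-activation step the paper leaves implicit.
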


\begin{proof}
For every $P\in U(Q)$, $Q$ is a face of $P$, which implies that $I(P,\cX)\cap I(P,\cY)\subset I(Q,\cX)\cap I(Q,\cY)$, hence  $r(Q)\leq\min_{P\in U(Q)}r(P)$.
Assume that $r(Q)<\min_{P\in U(Q)}r(P)$. In that case, the solution of \eqref{eq:opt1} must lie in the interior of $I(Q,\cX)\cap I(Q,\cY)$. 
Let $c_*$ be the minimizer that corresponds to $r(Q)$. Since $\eqref{eq:opt1}$ is a convex optimization problem and $I(Q,\cX)\cap I(Q,\cY)$ is a convex polyhedron, $c_*$ also solves the relaxed optimization problem \eqref{eq:opt1_relaxed} for $Q$ (since the relaxed constraint is the affine hull of the non-relaxed constraint $I(Q,\cX)\cap I(Q,\cY)$). This is in contradiction to $Q$ not being coupled Gabriel simplex.
\end{proof}

\begin{figure}[h]
	\begin{subfigure}{0.32\textwidth}
		\centering
		\includegraphics[width=\textwidth]{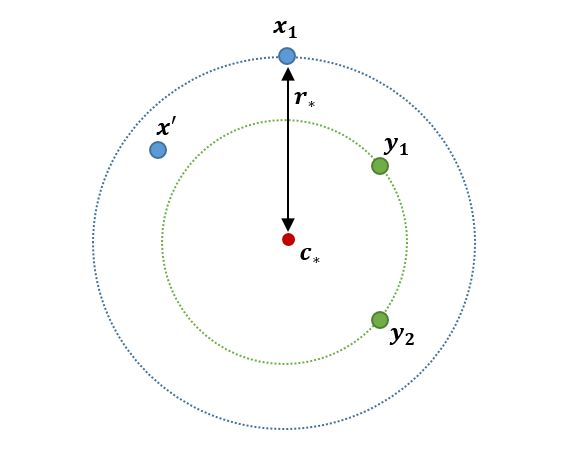}
		\caption{}
	\end{subfigure}
	~
	\begin{subfigure}{0.32\textwidth}
		\centering
		\includegraphics[width=\textwidth]{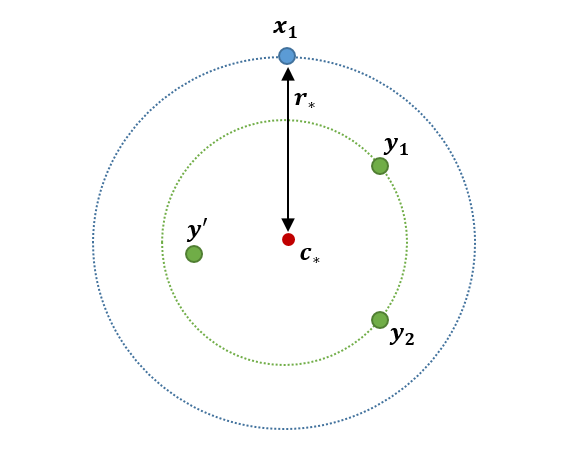}
		\caption{}
	\end{subfigure}
	~
	\begin{subfigure}{0.32\textwidth}
		\centering
		\includegraphics[width=\textwidth]{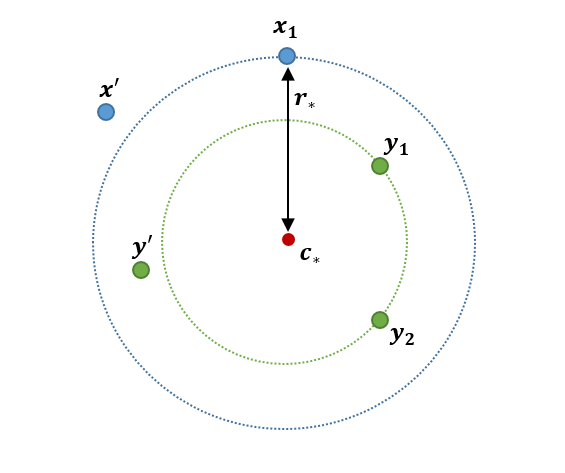}
		\caption{}
	\end{subfigure}
	\caption{Relaxed optimization problem solution validity in $\mathbb{R}^2$, for the case $Q_\cX=\{x_1\}$ and $Q_\cY=\{y_1,y_2\}$. In (a) and (b) the solution is not valid while in (c) it is. The point $c_*$ is the minimizer of \eqref{eq:opt2}. (a) $S_\cX$ (dashed blue line) includes a point $x'\in \cX\setminus Q_\cX$, hence $(P,Q)$ are not coupled Gabriel, where $P=\{x_1,y_1,y_2,x'\}$. (b) $S_\cY$ (dashed green line) includes a point $y'\in \cY\setminus Q_\cY$, hence $(P,Q)$ are not coupled Gabriel, where $P=\{x_1,y_1,y_2,y'\}$. (c) $S_\cX, S_\cY$ excludes all the points of $\cX,\cY$ respectively, hence, the solution $c_*,r_*$ is valid. 
	}
	\label{fig:coupled_gab}
\end{figure}

To conclude, in this section we introduced a way to determine the filtration values for the simplexes in $\cA^\mathrm{co}_{\infty}(\cX,\cY)$. Algorithm \ref{alg:coupled_alpha_filt} produces the filtration values for all the simplexes in $\cA^\mathrm{co}_{\infty}(\cX,\cY)$ sequentially, starting from top dimensional simplexes and moving downwards. The function \textbf{filtrationValue}, first appearing in line 5, returns the minimizer of the relaxed problem \eqref{eq:opt3} for given simplex. Note that in practice, we index the vertices of the sets in the following way $\cX=\{v_1,...,v_{|\cX|}\}$ and $\cY=\{v_{|\cX|+1},...,v_{|\cX|+|\cY|}\}$, so that given a simplex $P$ it is easy to split it to $P_\cX\subset\cX$ and $P_\cY\subset\cY$ such that $P=P_\cX\cup P_\cY$.

\begin{algorithm}
	\caption{Coupled Alpha Complex Filtration Values}
	\label{alg:coupled_alpha_filt}
	\SetKwInOut{Input}{input}\SetKwInOut{Output}{output}
	\Input{$\cX,\cY\subset\Rd$ - finite sets. \\
	$\{S_1,...,S_d\}$ - $S_i$ is the set of $i$-dimensional simplexes of $\cA^\mathrm{co}_{\infty}(\cX,\cY)$.\\}
	\Output{$filtration$ - the filtration values of the simplexes of $\cA^\mathrm{co}_\infty(\cX,\cY)$.}
    \BlankLine
	set all $filtration$ values to NAN\;
	\For{ $k$: $d+1$ to $0$}{
     	\For{all $P$ in $S_k$}{
        		\If{$filtration(P)$ is NAN}{
				$filtration(P)\leftarrow$\textbf{filtrationValue}$(P)$\;
        		}
        		\For{all faces $Q$ of $P$}{
            	    \eIf{$filtration(Q)$ is not NAN}{
                        $filtration(Q)=\min(filtration(Q),filtration(P))$\;}
                    {
                        \If{$Q$ is not coupled Gabriel for $P$}{
                	        $filtration(Q)=filtration(P)$\;
                        }
                    }            
                }
        }    
    }
    \textbf{return} $filtration$\;
\end{algorithm}

\begin{remark}
In \cite{blaser_relative_2020} the authors present the \emph{relative Delaunay-\cech complex}, $\rm{Del}\check{C}(X,A)$, in order to compute the relative-persistent homology $H_k(X,A)$, where $A\subset X\subset\Rd$. This complex is defined by similar building blocks as ours, computing $\cA^\mathrm{co}_\infty(\cX,\cY)$ is identical to the first step in computing all possible simplexes of  $\rm{Del}\check{C}(\cX\cup\cY,\cY)$. However, the filtration values are different. For $\rm{Del}\check{C}(\cX\cup\cY,\cY)$, the filtration value of a simplex $Q\not\subset \cY$ is the radius of the minimal bounding sphere of $Q$ (as in the \cech complex), while the filtration values of all simplexes $Q\subset \cY$ are set to $0$. Thus, making all cycles generated by chains in $C_\bullet (\cY)$ trivial.
\end{remark}

\section{The Number of Simplexes in a Random Coupled Alpha Complex}
\label{sec:num_simp}
In the general (non-random) case, the number of simplexes in a Delaunay triangulation in $(d+1)$-dimensions is $O(n^{\lceil (d+1)/2 \rceil})$ \cite{seidel_upperbound_1995}. Since the coupled alpha complex is a subset of the Delaunay triangulation, it contains at most $O(n^{\lceil(d+1)/2\rceil})$ $(d+1)$-simplexes. However, for a generic random data set the Delaunay triangulation contains $\Theta(n)$ simplexes \cite{dwyer_voronoi_linear_1991}. Unfortunately, this bound cannot be applied directly to the coupled alpha complex, since the data points are restricted to two parallel hyperplanes.

In this section we provide an upper probabilistic bound for the size of the coupled alpha complex, when the points are generated by a Poisson point process.

Let $\Omega\subset\Rd$ be a compact set and let $X_1,X_2,...$ be a sequence of i.i.d.~random variables uniformly distributed in $\Omega$. Let $N\sim \mathrm{Poisson}(n)$ be a Poisson random variable independent of the $X_i$-s. Then, we define  
$$\cP_n=\{X_1,...,X_N\}.$$ 
We say that $\cP_n$ is a homogeneous Poisson point process with intensity $n$.

Let $\Omega\subset\Rd$ be a compact set, and let $\cX_n$ and $\cY_n$ be two independent homogeneous Poisson processes on $\Omega$ with intensity $\lambda=n$. Denote by $F_k$ the number of $k$-simplexes in $\cA^\mathrm{co}_\infty(\cX_n,\cY_n)$. Our main result states that the expected value of $F_k$ is of order $n$. In addition, let $\diam(\Omega)<\rho<\infty$ and assume $\alpha>0$, where
$$\alpha=\min_{x\in\Omega, r\in\mathbb{R}}\frac{\Vol\left(B_r(x)\cap\Omega\right)}{\Vol(B_r(x))}.$$

\begin{prop}
\label{prop:simp_num} 
If $ n\rightarrow\infty$, then the expected number of $k$-simplexes in $\cA^\mathrm{co}_\rho(\cX_n,\cY_n)$ is 
$$\mathbb{E}\{F_k\}=\Theta(n).$$
I.e., 
$$C n\leq F_k \leq D n,$$
for $n\rightarrow\infty$, where $C,D$ are constants that do not depend on $n$. 
\end{prop}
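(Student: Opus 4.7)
The plan is to establish the upper and lower bounds separately. First observe that $\cA^\mathrm{co}_\rho(\cX_n,\cY_n) \subset \cA^\mathrm{co}_\infty(\cX_n,\cY_n)$, and by Proposition \ref{prop:1} the latter is isomorphic via $\pi$ to the Delaunay triangulation $\cD(\hat\cX_n\cup\hat\cY_n) \subset \R^{d+1}$, a pure $(d+1)$-dimensional simplicial complex. Purity yields the combinatorial bound $F_k \le \binom{d+2}{k+1} F_{d+1}$, so the upper bound reduces to showing $\E[F_{d+1}] = O(n)$.

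For $\E[F_{d+1}]$, I would apply the Slivnyak--Mecke formula, decomposing by the split $(a,b)$ of vertices between the two processes ($a+b = d+2$, $a,b \ge 1$; the cases $a=0$ or $b=0$ are ruled out since they would force $d+2$ points to be coplanar, contradicting coupled general position):
\begin{equation*}
\E[F_{d+1}^{a,b}] = \frac{n^{d+2}}{a!\,b!} \int_{\Omega^a \times \Omega^b} \mathbb{P}\bigl(\exists (c_0, r_\cX, r_\cY): \text{concentric empty balls through } Q_\cX, Q_\cY\bigr)\, dx\, dy.
\end{equation*}
I would then change variables via $(c_0, r_\cX, r_\cY, \xi_1, \dots, \xi_a, \eta_1, \dots, \eta_b) \mapsto (x_i = c_0 + r_\cX \xi_i,\ y_j = c_0 + r_\cY \eta_j)$ with $\xi_i, \eta_j \in S^{d-1}$; the dimensions match ($d + 2 + (a+b)(d-1) = (a+b)d$), and a Blaschke--Petkantschin-type computation produces a Jacobian proportional to $r_\cX^{a(d-1)} r_\cY^{b(d-1)}$ times a bounded angular factor. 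Since for top-dimensional simplexes the triple $(c_0, r_\cX, r_\cY)$ is generically unique, the emptiness event collapses to a single parameter-space point, and the Poisson property combined with the lower bound $\alpha > 0$ gives $\mathbb{P}(\text{empty}) \le \exp(-n\alpha\omega_d r_\cX^d)\exp(-n\alpha\omega_d r_\cY^d)$. Using $\int_0^\infty r^m e^{-cnr^d}\,dr = \Theta(n^{-(m+1)/d})$, the overall exponent simplifies to
\[
(d+2) - \tfrac{a(d-1)+1}{d} - \tfrac{b(d-1)+1}{d} = (d+2) - \tfrac{(d+2)(d-1)+2}{d} = 1,
\]
yielding $\E[F_{d+1}^{a,b}] = O(n)$ and hence $\E[F_{d+1}] = O(n)$.

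For the lower bound when $k \le d$, I would use the inclusion $\cA_\rho(\cX_n) \hookrightarrow \cA^\mathrm{co}_\rho(\cX_n,\cY_n)$. Since $\rho > \diam(\Omega)$, almost all Delaunay simplexes of $\cX_n$ in $\R^d$ have filtration value below $\rho$, and the linear-bound result of \cite{dwyer_voronoi_linear_1991} gives $\Theta(n)$ simplexes in each dimension $0 \le k \le d$. For $k = d+1$, I would tile $\Omega$ by $\Theta(n)$ axis-aligned boxes of volume $c/n$ with $c$ sufficiently small; in each box, a direct local computation shows that with probability bounded away from zero (uniformly in $n$) the Poisson configuration contains $d+2$ points with both processes represented, forming a mixed top-dimensional simplex whose lifted circumsphere is empty of other points. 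Independence of the Poisson processes across disjoint boxes yields $\Omega(n)$ such simplexes in expectation.

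The main obstacle I expect is the rigorous change-of-variables and Jacobian computation in the Mecke integral: the Blaschke--Petkantschin-type formula for two concentric spheres in $\R^d$ sharing a center must be derived carefully, and one must verify that the residual angular factor is integrable over $(S^{d-1})^{a+b}$. A secondary technical point is the lower bound for $k = d+1$, which requires an explicit local construction together with a uniform emptiness guarantee, made nontrivial by the two-layer (degenerate) structure of the lifted Poisson process in $\R^{d+1}$.
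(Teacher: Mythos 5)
Your proposal is correct and, for the heart of the argument, follows the same route as the paper: reduce to the top dimension via purity of the lifted Delaunay triangulation ($F_k\le\binom{d+2}{k+1}F_{d+1}$), split by the number of vertices in each process, apply the Mecke formula, and change variables to the common center and the two concentric radii with Jacobian scaling $r_\cX^{a(d-1)}r_\cY^{b(d-1)}$, so that the emptiness probability $e^{-n\alpha\omega_d(r_\cX^d+r_\cY^d)}$ yields gamma-type integrals and the exponent count $(d+2)-\frac{(d+2)(d-1)+2}{d}=1$, i.e.\ $\E\{F_{d+1}\}=O(n)$; this is exactly the computation in the paper, including the uniqueness of the center (Lemma \ref{lemma:plns_int_dim}) that makes the map a bijection. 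The only genuine divergence is the lower bound. The paper bounds $F_k$ from below via the inclusion $\cA_{\infty}(\cX_n\cup\cY_n)\subset\coA_{\infty}(\cX_n,\cY_n)$ together with the expected-size formula of \cite{edelsbrunner_expected_2017}, whereas you use $\cA_\rho(\cX_n)\hookrightarrow\coA_\rho(\cX_n,\cY_n)$ and \cite{dwyer_voronoi_linear_1991}; these are equivalent in spirit, though your claim that ``almost all'' Delaunay simplexes of $\cX_n$ have filtration value below $\rho$ deserves a line of justification (boundary slivers can have large circumradius, but their expected number is negligible). More interestingly, you treat $k=d+1$ separately with a tiling/local-construction argument; this case is not covered by either inclusion argument, since the alpha complex in $\Rd$ has no $(d+1)$-simplexes, and the paper passes over it silently, so your extra step is a genuine (if only sketched) improvement in completeness. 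One small caveat shared with the paper: the bound $\Vol(B_r(c)\cap\Omega)\ge\alpha\omega_d r^d$ is applied with centers $c\in\Omega_\rho$ possibly outside $\Omega$, which requires a slightly adapted definition of $\alpha$; this does not affect the order of the bound.
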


\begin{proof} In order to prove \ref{prop:simp_num}, we first characterize the $(d+1)$-simplexes in $\cA^{\mathrm{co}}$ in spherical-like coordinates. We use the change of variables to prove that $\E\{F_{d+1}\}=O(n)$. Then, we bound the number of $k$-simplexes by 
$$\E\{F_k\}\leq \binom{d+2}{k+1}\E\{F_{d+1}\}.$$

We start by defining $F_{d+1}$ explicitly.
Let $P\subset\cX_n\cup{{\cY}_n}$. Denote $P_\cX=P\cap{{\cX}_n},P_\cY=P\cap{{\cY}_n}$. Assume that $|P_\cX|+|P_\cY|=d+2$ and $P_\cX=l$ for $0 < l < d+2$ (both sets non-empty).
First, we define the following,
$$C(P_\cX,P_\cY) := \bigcap_{x\in P_{\cX}}\vor(x, P_{\cX})\bigcap_{y\in P_{\cY}}\vor(y,P_{\cY}),$$
$$R_\cX(P_\cX,P_\cY):=||C(P_\cX,P_\cY)-x_0||,$$
$$R_\cY(P_\cX,P_\cY):=||C(P_\cX,P_\cY)-y_0||,$$
$$B_\cX(P_\cX,P_\cY):=\big\{z\in\Rd: ||z-C(P_\cX,P_\cY)||\leq R_\cX(P_\cX,P_\cY)\big\},$$
$$B_\cY(P_\cX,P_\cY):=\big\{z\in\Rd: ||z-C(P_\cX,P_\cY)||\leq R_\cY(P_\cX,P_\cY)\big\},$$
where $x_0\in P_\cX$ and $y_0\in P_\cY$ are arbitrary points.
Next, define
$$h_r(P_\cX,P_\cY)=\ind\left\{R_\cX(P_\cX,P_\cY)\leq r\text{ and } R_\cY(P_\cX,P_\cY)\leq r \right\},$$
and
\begin{equation}
\label{eq:g_mlk}
g_{r}(P_\cX,P_\cY; {{\cX}_n}, {{\cY}_n}) = h_r(P_\cX,P_\cY)
\ind\{ B_\cX(P_\cX, P_\cY)\cap {{\cX}_n} = \emptyset \}\ind\{ B_\cY(P_\cX, P_\cY)\cap {{\cY}_n} = \emptyset \}.
\end{equation} 
Finally, we can express $F_{d+1}$ in terms of the above functions, 
\begin{equation} \label{eq: 2}
F_{d+1}
=
	\sum\limits_{l=1}^{d+1} 
		\sum\limits_{\substack{P_\cX\subset {{\cX}_n} \\ |P_\cX|=l}}
			\sum\limits_{\substack{P_\cY\subset {{\cY}_n} \\ |P_\cY|=d-l+2}}
				g_{r}(P_\cX,P_\cY; {{\cX}_n}, {{\cY}_n}).
\end{equation}

Recall that in our case $X=\Rd$, and ${{\cX}_n}$ and ${{\cY}_n}$ are two independent homogeneous Poisson processes with intensity $\lambda=n$. Applying Mecke's formula (see Theorem \ref{th:Palm} in the Appendix A) twice for \eqref{eq: 2} yields,
\begin{equation}\label{eq:last_exp} 
\E\{F_{d+1}\} 
= \frac{n^{d+2}}{(d+2)!}	
	\sum\limits_{l=1}^{d+1} 
		\binom{d+2}{l}	
		\E
		\{
			g_{r}(P_\cX',P_\cY'; {{\cX}_n}\cup P_\cX', {{\cY}_n}\cup P_\cY')
		\}.
\end{equation}
where $P'_\cX$ and $P'_\cY$ are sets of uniform random variables in $\Omega$ of size $l$ and $d-l+2$, respectively, and are independent of ${{\cX}_n}$ and ${{\cY}_n}$.

Conditioning on $(P_\cX',P_\cY')$, and using the fact that ${{\cX}_n}$ and ${{\cY}_n}$ are independent, yields
\begin{multline}
\E
\big\{
\ind\{ B_\cX(P_\cX',P_\cY')\cap {{\cX}_n} = \emptyset \}\ind\{ B_\cY(P_\cX',P_\cY')\cap {{\cY}_n} = \emptyset \}
\ | P_\cX',P_\cY'
\big\}
\\
=
e^{-n\left[\Vol\left(B_\cX(P_\cX',P_\cY')\right)+\Vol\left(B_\cY(P_\cX',P_\cY')\right)\right]}.
\end{multline}
Therefore, using \eqref{eq:g_mlk}, we have
\begin{equation}\label{eq: 6}
\E
\{
	g_{r}(P_\cX',P_\cY'; {{\cX}_n}\cup P_\cX', {{\cY}_n}\cup P_\cY')
\}
=
\E
\left\{ h_r(P_\cX,P_\cY)
e^{-n\left[\Vol\left(B_\cX(P_\cX',P_\cY')\right)+\Vol\left(B_\cY(P_\cX',P_\cY')\right)\right]}
\right\}.
\end{equation}

Next, we compute \eqref{eq: 6} in terms of integrals. 
We start by defining a transformation that takes the vertices of a $(d+1)$-simplex in $\Rd$ to a coordinate system where they are expressed in terms of the minimal pair of open balls defined in Section \ref{sec: filt_vals}.

Let $\bs{x}\in(\Rd)^l, \bs{y}\in(\Rd)^{d+2-l}$ for $0<l<d+2$. Let $s:(\Rd)^{d+2}\rightarrow \Rd\times(\mathbb{R})^2\times (S_{d-1})^{d+2}$ be the map defined by
\begin{equation} \label{eq: var_change}
(\bs{x},\bs{y})\mapsto (z+r_{\bs{x}}\bs{u},z+r_{\bs{y}}\bs{v}),
\end{equation}
where $z=\vor(\bs{x},\bs{x})\cap\vor(\bs{y},\bs{y})$, 
$r_{\bs{x}}=\|x-z\|,r_{\bs{y}}=\|y-z\|$ for arbitrary $x\in\bs{x},y\in\bs{y}$, and $\bs{u},\bs{v}$ are points on the $(d-1)$-dimensional unit sphere. First note that the number of degrees of freedom in the codomain equals $d+2+(d-1)(d+2)=d(d+2)$, which is equal to that of the domain. The point $z$ is a unique point (see Lemma \ref{lemma:plns_int_dim} in the Appendix A), hence the map $s$ is a bijection. 
Note that for $r\geq\diam(\Omega)$ we have
$$\Omega\subset\bigcap_{x\in\cX_n\cup\cY_n}B_r(x),$$
which implies that the homology does not change if $r$ is increased.
Hence, we can limit our analysis for simplexes with $r\leq\rho=\diam(\Omega)$ and $z\in\Omega_\rho:=\cup_{x\in\Omega}B_\rho(x)$.
We can then bound \eqref{eq: 6} in the following way,
\begin{equation*}
\int\limits_{(\bs{x},\bs{y})\subset\Omega^{d+2}} g_{\rho}(\bs{x},\bs{y})d\bs{x}d\bs{y}
\end{equation*}
\begin{equation} \label{eq: first_bound}
\leq
\int\limits_{z\in\Omega_\rho}
	\int\limits_{r_{\bs{x}}\geq 0}^{\rho}
		\int\limits_{r_{\bs{y}}\geq 0}^{\rho}
			\int\limits_{\bs{u}\in S^{l}}
					\int\limits_{\bs{v}\in S^{d-l+2}}
e^{-n(\alpha\omega_d r_{\bs{x}}^d+\alpha\omega_d r_{\bs{y}}^d)}					
J(z,r_{\bs{x}},r_{\bs{y}},\bs{u},\bs{v})
					d\bs{v}
			d\bs{u}
		dr_{\bs{y}}
	dr_{\bs{x}}
dz,
\end{equation}
where $\omega_d$ is the volume of the unit ball in $\Rd$.

From \eqref{eq: var_change}, we have that the Jacobian satisfies $J(z,r_{\bs{x}},r_{\bs{y}},\bs{u},\bs{v})=(r_{\bs{x}}^{d-1})^l (r_{\bs{y}}^{d-1})^{d-l+2}J(0,1,1,\bs{u},\bs{v}).$
Hence, we can rewrite the RHS of \eqref{eq: first_bound} in the following way,
$$
	\int\limits_{0}^{\rho}
		r_{\bs{x}}^{d\left(\frac{l(d-1)+1}{d}\right)-1} e^{-n\alpha\omega_d r_{\bs{x}}^d}
		dr_{\bs{x}}
	\int\limits_{0}^{\rho}
		r_{\bs{y}}^{d\left(\frac{d(d-l+1)+l-1}{d}\right)-1}
		e^{-n\alpha\omega_d r_{\bs{y}}^d}
		dr_{\bs{y}}
\int\limits_{z\in\Omega_\rho} dz
\int\limits_{\bs{u}\in S^{l}} 
\int\limits_{\bs{v}\in S^{d-l+2}}
J(0,1,1,\bs{u},\bs{v})
d\bs{v} d\bs{u}
.
$$
Finally, using the changes of variables apply the following change of variables $n\alpha\omega_d r_x^d\rightarrow t_x$, $n\alpha\omega_d r_y^d\rightarrow ty$, and the definition of the incomplete gamma function $\gamma(\cdot, \cdot)$, we have
\begin{align}\label{eq:last_int}
\begin{split} 
\int\limits_{(\bs{x},\bs{y})\subset\Omega^{d+2}} g_{\rho}(\bs{x},\bs{y})d\bs{x}d\bs{y}
& \leq
\frac{|\Omega_\rho|}{d^2(n\alpha\omega_d)^{d+1}}
C_{l}^{d}
\gamma\left(\frac{l(d-1)+1}{d},\alpha\Lambda\right)\gamma\left(\frac{d(d-l+1)+l-1}{d},\alpha\Lambda\right)
\\
&  \leq
\frac{|\Omega_\rho|}{d^2(n\alpha\omega_d)^{d+1}}
C_{l}^{d}
\gamma^2(d+1,\alpha\Lambda),
\end{split}
\end{align}
where 
$$\Lambda := n\omega_d{\rho}^d,\quad\text{and}\quad C_{l}^{d}
:=
	\int\limits_{\bs{u}\in S^{l}}
		\int\limits_{\bs{v}\in S^{d-l+2}}
			J(0,1,1,\bs{u},\bs{v})
		d\bs{v}
	d\bs{u}.
$$
To conclude, substituting \eqref{eq:last_int} into \eqref{eq:last_exp} we have that the number of $(d+1)$-simplexes is bounded by
\begin{equation*}
\E\{F_{d+1}\}
\leq
\frac{n|\Omega_\rho|}{d^2(\alpha\omega_d)^{d+1}} 
	\frac{\gamma^2(d+1,\alpha\Lambda)}{\Gamma(d+1)}
	\sum\limits_{l=1}^{d+1} \binom{d+2}{l} C_{l}^{d}\\
= O(n)
\end{equation*}
where we used the fact that $\gamma(d+1,\cdot)\leq \Gamma(d+1)$. 
Hence, we conclude that the total number of $k$-simplexes centered at $\Omega_\rho$ in $\coA_{\infty}({{\cX}_n},{{\cY}_n})$ is bounded from above by
\begin{align*}
\begin{split}
\E\{F_{k}\}
\leq
\binom{d+2}{k+1} \E\{F_{d+1}\} = O(n).
\end{split}
\end{align*}
The following result given in \cite{edelsbrunner_expected_2017} is the number of $k$-simplexes in the alpha complex (when $r\rightarrow\infty$) of a Poisson point process with intensity $n$, that are centered at 
$\Omega_\rho$,
\begin{equation} \label{eq:edel_alpha}
n|\Omega_\rho|\sum_{m=k}^d\sum_{l=0}^k\binom{m-l}{m-k}\tilde{C}^d_{k,m}.
\end{equation}
By using \eqref{eq:edel_alpha}, we can bound the number of simplexes from below by the number of simplexes in  $\cA_{\infty}({{\cX}_n}\cup{{\cY}_n})$, since $\cA_{\infty}({{\cX}_n}\cup{{\cY}_n})\subset\coA_{\infty}({{\cX}_n},{{\cY}_n})$. 
Hence,  we conclude that 
$$\E\{F_k\}
=
\Theta(n)
.
$$
\end{proof}

\appendix
\section*{Appendix}
\label{sec: appx}
\section{Meck's Formula} 
The following theorem is a result of Palm theory for Poisson point processes.
\begin{thm}[Mecke's formula] \label{th:Palm}
Let $(X,\rho)$ be a metric space, let $f:X\rightarrow\mathbb{R}$ be a probability density on $X$, and let $\Pn$ be a random Poisson process on $X$ with intensity $\lambda_n=nf$. Let $h(\cY,\cX)$ be a measurable function defined for all finite subsets $\cY\subset\Pn\subset X$ with $|\cY|=k$. Then
\begin{equation} \label{eq:Mecke}
\E\left\{ \sum\limits_{\substack{\cY\subset \Pn \\ |\cY|=k}}h(\cY,\Pn) \right\}=\dfrac{n^k}{k!}\E\{h(\cY',\cY'\cup\Pn)\},
\end{equation}
where the sum is over all subsets $\cY\subset\Pn$ of size $|\cY|=k$, and $\cY'$ is a set of $k$ iid random variables in $X$ with density $f$, independent of $\Pn$.
\end{thm}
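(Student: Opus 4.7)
The plan is to prove the two sides of the $\Theta(n)$ bound separately. For the upper bound, the strategy is to first control the expected number of top-dimensional $(d+1)$-simplexes, $\E\{F_{d+1}\}$, and then use the fact that each $k$-simplex is a face of at most finitely many $(d+1)$-simplexes in a Delaunay triangulation to pass the bound down to all $k \le d+1$. For the lower bound, I would exploit the inclusion $\cA_\infty(\cX_n \cup \cY_n) \subset \cA^{\mathrm{co}}_\infty(\cX_n, \cY_n)$ and cite the known linear lower bound for the expected size of a random alpha complex.

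For the $(d+1)$-simplex count, I would first parametrize a generic $(d+1)$-simplex $P = P_\cX \cup P_\cY$ by its splitting $|P_\cX| = l$ with $0 \le l \le d+2$, and write $F_{d+1}$ as a triple sum of indicators that (i) the minimal coupled pair of concentric balls $(B_\cX(P_\cX,P_\cY), B_\cY(P_\cX,P_\cY))$ both have radius $\le \rho$ and (ii) $B_\cX \cap \cX_n = \emptyset$, $B_\cY \cap \cY_n = \emptyset$. Applying Mecke's formula (Theorem~\ref{th:Palm}) once to $\cX_n$ and once to $\cY_n$, using that the two processes are independent, converts the sum into $n^{d+2}/(d+2)!$ times the expectation of this indicator evaluated at iid uniform sets $(P_\cX', P_\cY')$ in $\Omega$, with the emptiness indicators replaced by
\[
\exp\bigl(-n \Vol(B_\cX(P_\cX', P_\cY') \cap \Omega) - n \Vol(B_\cY(P_\cX', P_\cY') \cap \Omega)\bigr).
\]
The hypothesis $\alpha > 0$ then replaces each volume by at least $\alpha \omega_d r_\cX^d$ and $\alpha \omega_d r_\cY^d$ respectively, giving a clean exponential in the two radii.

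The next step is the key change of variables: I would send the $d+2$ vertices $(\bs{x}, \bs{y}) \in \Omega^{d+2}$ to coordinates $(z, r_\cX, r_\cY, \bs{u}, \bs{v})$, where $z$ is the unique common Voronoi vertex of Section~\ref{sec: filt_vals}, and $\bs{u}, \bs{v}$ are unit vectors in $S^{d-1}$ recording the directions of the points from $z$. The Jacobian scales as $r_\cX^{l(d-1)} r_\cY^{(d+2-l)(d-1)}$ times a bounded angular factor $J(0,1,1,\bs{u}, \bs{v})$, producing a product of two one-dimensional integrals of the form $\int_0^\rho r^{a-1} e^{-n\alpha\omega_d r^d} dr$. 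Substituting $t = n\alpha\omega_d r^d$ turns each into $(n\alpha\omega_d)^{-a/d}/d$ times an incomplete gamma function $\gamma(a/d, \alpha\Lambda)$ with $\Lambda = n\omega_d \rho^d$; the exponents work out so that the combined prefactor is $n^{-(d+1)}$, and after multiplying by $n^{d+2}$ from Mecke we obtain $\E\{F_{d+1}\} = O(n)$. Since every $k$-simplex is a face of at least one $(d+1)$-simplex in $\cA^{\mathrm{co}}_\infty$ (which is the Delaunay complex in $\Rdd$ projected down) and has at most $\binom{d+2}{k+1}$ co-facets counted this way, we deduce $\E\{F_k\} \le \binom{d+2}{k+1} \E\{F_{d+1}\} = O(n)$.

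For the matching lower bound I would invoke \eqref{eq:edel_alpha}: the expected number of $k$-simplexes of the standard alpha complex of a homogeneous Poisson process on $\Omega$ with intensity $n$ is exactly linear in $n$ with a positive constant. Since $\cX_n \cup \cY_n$ is itself a Poisson process on $\Omega$ with intensity $2n$, and $\cA_\infty(\cX_n \cup \cY_n) \subset \cA^{\mathrm{co}}_\infty(\cX_n, \cY_n) \subset \cA^{\mathrm{co}}_\rho(\cX_n, \cY_n)$ once $\rho \ge \diam(\Omega)$ (the construction is then the full projected Delaunay complex), the same linear lower bound transfers. Combined with the upper bound this yields $\E\{F_k\} = \Theta(n)$. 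The main technical obstacle I anticipate is the change-of-variables step: one must verify carefully that the map $s$ from vertex tuples to $(z, r_\cX, r_\cY, \bs{u}, \bs{v})$ is a bijection on the relevant open set (using the coupled general position assumption, cf.\ Lemma~\ref{lemma:plns_int_dim}) and that the Jacobian factorizes cleanly, since without this factorization the radial and angular integrals cannot be decoupled and the gamma-function estimate fails.
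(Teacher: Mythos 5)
Your proposal does not address the statement it is supposed to prove. The statement is Mecke's formula (Theorem \ref{th:Palm}) itself; what you have written is an outline of the proof of Proposition \ref{prop:simp_num}, the $\Theta(n)$ bound on the expected number of simplexes in the random coupled alpha complex. Indeed, your argument explicitly \emph{invokes} ``Mecke's formula (Theorem \ref{th:Palm})'' as a tool in its second paragraph, so read as a proof of Theorem \ref{th:Palm} it is circular, and read on its own terms it simply proves a different result. For the record, the paper does not prove Mecke's formula either: it is stated in the appendix and the reader is referred to Penrose's book for the proof.

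If you want to actually prove the theorem as stated (for a Poisson process $\Pn$ with a.s.\ finitely many points and intensity $nf$), the standard route is elementary: condition on $N=|\Pn|\sim\mathrm{Poisson}(n)$, use that given $N=m$ the points are $m$ iid samples from $f$, so that by exchangeability the sum over $k$-subsets contributes $\binom{m}{k}$ identical terms, each equal to $\E\{h(\cY',\cY'\cup\Pn_{m-k})\}$ where $\Pn_{m-k}$ denotes $m-k$ further iid points; then use the Poisson identity
\begin{equation*}
\E\Bigl\{\tbinom{N}{k}\,\phi(N-k)\Bigr\}=\frac{n^k}{k!}\,\E\{\phi(N)\}
\end{equation*}
(which follows from reindexing the Poisson sum) to recover the right-hand side of \eqref{eq:Mecke}. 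None of this machinery appears in your proposal.
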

For a proof of Theorem \ref{th:Palm} see \cite{penrose_random_2003}.

\section{Random Hyperplanes Lemma}
Let $Q,P\subset\Rd$ be two subsets, such that $|Q|=l$ and $|P|=d-l+2$, for $0 < l < d+2$.
Denote 
$$K_Q:=\bcap_{q\in Q}\vor(q,Q),\quad\text{and}\quad K_P:=\bcap_{p\in P}\vor(p,P).$$
In other words, $K_Q$ and $K_P$ are all the points that equidistant from the points $Q$ and $P$ respectively.

\begin{lem} \label{lemma:plns_int_dim}
Let $P,Q$ be sets of iid  points sampled from a distribution with a density in $\Rd$. The following holds almost surely.
$$\dim(K_Q)=d-l+1,\quad\dim(K_P)=l-1,$$ 
and
$$\dim(K_Q\cap K_P) = 0.$$
\end{lem}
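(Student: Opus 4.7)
The plan is to express $K_Q$ and $K_P$ as affine subspaces of $\Rd$ cut out by perpendicular bisectors, and then reduce the three dimension claims to showing that certain families of normal vectors are almost surely linearly independent. Fix an ordering $Q=\{q_1,\ldots,q_l\}$. A point $c\in\Rd$ lies in $K_Q$ iff $\|c-q_1\|^2=\|c-q_i\|^2$ for $i=2,\ldots,l$; expanding, each of these equations becomes the affine condition $2(q_i-q_1)^T c = \|q_i\|^2-\|q_1\|^2$. Hence $K_Q$ is the solution set of a linear system whose coefficient matrix has rows $(q_i-q_1)^T$ for $i=2,\ldots,l$, and similarly $K_P$ has coefficient matrix of $d-l+1$ rows $(p_j-p_1)^T$.

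Next I would argue that almost surely the $l-1$ vectors $\{q_i-q_1\}_{i=2}^{l}$ are linearly independent. The event that they are dependent is the vanishing locus of the polynomial $\Delta_Q(q_1,\ldots,q_l)=\sum_{\sigma}\det(M_\sigma)^2$, where the sum runs over $(l-1)\times(l-1)$ minors of the matrix with rows $(q_i-q_1)^T$. Since one can exhibit points $q_1,\ldots,q_l$ with $\Delta_Q\neq 0$ (e.g.\ an affine basis of an $(l-1)$-flat), $\Delta_Q$ is a nontrivial polynomial, so its zero locus has Lebesgue measure zero in $(\Rd)^{l}$. Because the joint law of $Q$ is absolutely continuous, $\Delta_Q\neq 0$ almost surely, and therefore $\dim(K_Q)=d-(l-1)=d-l+1$ a.s. The identical argument applied to $P$ gives $\dim(K_P)=d-(d-l+1)=l-1$ a.s.

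For the intersection, $c\in K_Q\cap K_P$ satisfies the combined affine system of $(l-1)+(d-l+1)=d$ equations, whose normals are $\{q_i-q_1\}_{i=2}^{l}\cup\{p_j-p_1\}_{j=2}^{d-l+2}$, a set of exactly $d$ vectors in $\Rd$. It suffices to show these $d$ vectors are almost surely linearly independent. Linear dependence is again the zero set of a single determinant polynomial $\Delta(q_1,\ldots,q_l,p_1,\ldots,p_{d-l+2})$, which is nonzero because one can choose the $d+2$ points so that the differences form a basis of $\Rd$. By absolute continuity of the joint law on $(\Rd)^{d+2}$, this polynomial vanishes on a set of measure zero, so almost surely the $d$ normals are independent and $K_Q\cap K_P$ is a single point, i.e.\ $\dim(K_Q\cap K_P)=0$.

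The only delicate point is justifying that each linear-dependence event is \emph{generically} excluded: this reduces to exhibiting one configuration in which the determinantal polynomial is nonzero, after which absolute continuity of the density drives its probability to zero. Note that independence of $P$ and $Q$ is not essential for the last step — what is used is only that the joint distribution of all $d+2$ points is absolutely continuous with respect to Lebesgue measure on $(\Rd)^{d+2}$, which follows from each point having a density.
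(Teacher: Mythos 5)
Your proof is correct and follows essentially the same route as the paper: both express $K_Q$, $K_P$, and their intersection as affine systems coming from the perpendicular-bisector equations and reduce the dimension claims to the coefficient matrix having full row rank almost surely. The only difference is that where the paper appeals to the (almost sure) coupled general position assumption to get $\rank(A)=d$, you justify the full-rank claim directly via the measure-zero vanishing locus of a nontrivial determinantal polynomial, which makes the ``almost surely'' step self-contained.
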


\begin{proof} Assume an arbitrary ordering on $Q$ and $P$. The intersection $K_Q\cap K_P$ can be expressed as the set of all points $x\in\Rd$ that solve the system, $Ax=b$, where
\[
A=
\begin{bmatrix}
(q_2-q_1)^T	\\
\vdots \\
(q_l-q_1)^T	\\
(p_2-p_1)^T	 \\
\vdots \\
(p_{d-l+2}-p_1)^T 
\end{bmatrix}
,\ x=
\begin{bmatrix}
x_1 \\
\vdots \\
x_d
\end{bmatrix}
,\ b= 
\frac{1}{2}
\begin{bmatrix}
||q_2||^2-||q_1||^2\\
\vdots \\
||q_l||^2-||q_1||^2 \\
||p_2||^2-||p_1||^2\\
\vdots \\
||p_{d-l+2}||^2-||p_1||^2 
\end{bmatrix}.
\] 
Since the points of $Q\cup P$ are iid , with a density in $\Rd$, they are almost surely in a coupled general position (Definition \ref{def:coup_gen_pos}). Hence, $\rank(A)=d$ (since its rows are linearly independent) and there is a unique solution to the system, i.e.~a point. Note that if we apply similar arguments to the part of the matrix that contains $Q$ only or $P$ only, we get $\dim(K_Q)=d-l+1$ and $\dim(K_P)=l-1$.  
\end{proof}

\newpage

\vskip 0.2in
\bibliographystyle{plain}
\bibliography{general}

\end{document}